\numberwithin{equation}{section}
\newcommand{\R}{\mathbb R}
\newcommand{\Prob}{\mathbf P}
\newcommand{\entropy}{\mathbf H}
\newcommand{\ex}{\mathbf E}
\newcommand{\1}{\mathbbm{1}}
\DeclareMathOperator{\Var}{Var}
\theoremstyle{definition}
\newtheorem{definition}{Definition}[section]
\theoremstyle{plain}
\newtheorem{theorem}[definition]{Theorem}
\newtheorem{proposition}[definition]{Proposition}
\newtheorem{lemma}[definition]{Lemma}
\theoremstyle{remark}
\newtheorem{remark}[definition]{Remark}
\newtheorem{example}[definition]{Example}
\begin{document}

\title[Properties of the  Shannon,  R\'{e}nyi and other entropies]{Properties of the  Shannon,  R\'{e}nyi and other entropies: dependence in parameters, robustness in distributions and extremes}

\author{Iryna Bodnarchuk$^1$, Yuliya Mishura$^1$, Kostiantyn Ralchenko$^{1,2}$}

\address{$^1$Taras Shevchenko National University of Kyiv\\
$^2$University of Vaasa}

\thanks{YM is supported by The Swedish Foundation for Strategic Research, grant UKR24-0004, and by the Japan Science and
Technology Agency CREST, project reference number JPMJCR2115.
KR is supported by the Research Council of Finland, decision number 359815.
YM and KR acknowledge that the present research is carried out within the frame and support of the ToppForsk project no.~274410 of the Research Council of Norway with the title STORM: Stochastics for Time-Space Risk Models.}

\begin{abstract}
We calculate and analyze various entropy measures and their properties for selected probability distributions. The entropies considered include Shannon, R\'enyi, generalized R\'enyi, Tsallis, Sharma--Mittal, and modified Shannon entropy, along with the Kullback--Leibler divergence. These measures are examined for several distributions, including gamma, chi-squared, exponential, Laplace, and log-normal distributions.
We investigate the dependence of the entropy on the parameters of the respective distribution. 
We also study the convergence of Shannon entropy for certain probability distributions.
Furthermore, we identify the extreme values of Shannon entropy for Gaussian vectors.
\end{abstract}

\maketitle

\section{Introduction}\label{secIntr}

The concept of entropy, being originated in thermodynamics, then became of high importance in biology, chemistry, information theory, synergetics and many other fields.  Unlike a large number of articles that examine all possible applications of the concept of entropy (among the most recent such works, we will name, without in any way claiming to be a complete list, such works as \cite{Bianconi2008,  CALI2017, Frenkel2015,HARRN2021,MilevInverardiTagliani2012, MRZZ_2024,Singh2013,Ullah1996,VanCampenhoutCover1981,YanLiSun2025,ZhangMengMaYinZhou2024}), in this paper we set ourselves the following task: before applying one or another entropy of one or another probability distribution, this selected entropy must first be calculated, represented clearly through the  parameters of the distribution, and its basic properties must be studied. The solution to this problem is what this article is dedicated to. Taking in mind that we have already made some related calculations, namely, in the paper \cite{MMRR} we calculated several entropies of the Gaussian distribution and studied their properties, in \cite{MMRS} we studied the Shannon entropy for the fractional Gaussian noise, and in the paper \cite{braiman2024} we investigated properties of the Shannon and  R\'enyi
entropies of the Poisson distribution, we concentrated our efforts on the calculations and properties that were not produced before. 

More precisely, in Section \ref{sec2} we consider 6 types of entropies, and all of them are named after those who introduced them into consideration: the Shannon entropy, the R\'{e}nyi entropy with index $\alpha>0,$ $\alpha\ne1,$ the generalized R\'enyi entropy with parameter $\alpha > 0$, the Tsallis entropy with parameter $\alpha > 0$, $\alpha \neq 1 $,  the generalized R\'enyi entropy with parameters $\alpha \neq \beta$, $\alpha > 0$, $\beta > 0$ and the Sharma--Mittal entropy with parameters $\alpha > 0$, $\beta > 0$, $\alpha \neq 1$, $\beta \neq 1 $. All of  them are defined both for discrete and absolutely continuous distributions. Additionally, we consider the Kullback--Leibler divergence for such distributions, and the modified Shannon  entropy for  absolutely continuous distributions with bounded density, and such modified entropy is always non-negative for them. 

Then, in Section \ref{expparam} we calculate the Shannon, all  R\'{e}nyi, Tsallis, Sharma--Mittal entropies and the Kullback--Leibler divergence for gamma, chi-squared, exponential, Laplace and log-normal distributions, and also calculate the modified Shannon entropy for normal,  gamma, chi-squared, exponential, Laplace and log-normal distributions, for such values of parameters of distributions that supply the boundedness of their density. 

Having these values, in Section \ref{secdeppar} we study the  dependence of entropies in parameters of the respective distribution. Without wishing to overload the reader with numerous technical details, but wishing to fill in the existing gaps, we will limit ourselves to the selected distributions and entropies. Namely, we provide the new proof of the fact that the Shannon entropy of the Poisson distribution increases with the intensity $\lambda$, study in detail the behavior of the Shannon entropy of gamma distribution as the function of two parameters, using properties of digamma and trigamma functions, and then analyze the behavior of all entropies for exponential  as the function of intensity $\lambda$. In principle, other distributions can be treated similarly. 

In Section \ref{robust} we prove that some convergences of distributions imply convergences of  Shannon entropies, and this sense, Shannon entropy has a property of robustness. Another approach to robustness of Shannon entropy was realized in \cite{MR4312787}.

Finally, in Section \ref{sec6} we consider Gaussian vector and its Shannon entropy. This entropy contains logarithm of determinant of covariance matrix. Assuming that we know   variances of vector components, in other words, we know the entropies of one-dimensional distributions, or, again in other words, the diagonal of covariance matrix,  we find  minimal and maximal values of the determinant and establish on which vectors they are realized.  

\section{Calculation of entropies, divergence and modified entropy}\label{sec2}

\subsection{Definitions of entropies} \looseness=1 Consider two kinds of probability distributions:  a discrete probability distribution with probabilities $\{p_i, i \geq 0\}$,
for which  we assume that they  are non-degenerate in the sense that all  probabilities  $p_i\in(0,1) $  and $\sum_{i = 0}^{\infty}p_i=1$, and probability distributions having a  density  $\{p(x), x\in \R\}$. Let us introduce six entropies and Kullback--Leibler divergence for such distributions. 

\enlargethispage{5pt}

\begin{definition}\label{def:entropies}
\begin{enumerate}
\item 
The Shannon entropy of a discrete distribution $\{p_i,i\geq0\}$ (of a distribution with the density $p(x),x\in\R$) is given by the formulas
\begin{gather}
\entropy_{SH}=\entropy_{SH} (p_i,i\geq0) =-\sum_{i = 0}^{\infty}  p_i\log {p_i}  =  \sum_{i = 0}^{\infty}   p_i \log\left(\frac{1}{p_i}\right) ,
\notag\\
\label{eq:SHdens}
\entropy_{SH}=\entropy_{SH} (p(x),x\in\R) 
=-\int_\R p(x) \log p(x)dx  =  \int_\R  p(x) \log\left(\frac{1}{p(x)}\right)dx ,
\end{gather}
respectively.

\item
The R\'{e}nyi entropy with index $\alpha>0,$ $\alpha\ne1,$ of a discrete distribution $\{p_i,i\geq0\}$ (of a distribution with the density $p(x),x\in\R$) is given by the formulas
\begin{gather*}
\entropy_{R}(\alpha)=\entropy_{R}(\alpha, p_i, i\geq0) = \frac{1}{1-\alpha} \log
\left(\sum_{i=0}^\infty p_i^{\alpha}\right).
\\
\entropy_{R}(\alpha) =\entropy_{R}(\alpha,p(x),x\in\R)=
\frac{1}{1-\alpha} \log\left(\int_\R p^\alpha(x)dx\right),
\end{gather*}
respectively.
 \item  The generalized R\'enyi entropy with parameter $\alpha > 0$  of a discrete distribution $\{p_i,i\geq0\}$ (of a distribution with the density $p(x),x\in\R$) is given by the formulas
    \begin{gather}
    \entropy_{GR}(\alpha) = \entropy_{GR}(\alpha, p_i, i\geq0)
    = - \frac{\sum_{i = 0}^{\infty} p_i^\alpha   \log{p_i}  }{\sum_{i = 0}^{\infty} p_i^\alpha}
    = \frac{\sum_{i = 0}^{\infty} p_i^\alpha   \log{\frac{1}{p_i}}  }{\sum_{i = 0}^{\infty} p_i^\alpha},
    \notag
    \\\label{eq:GR2dens}
    \entropy_{GR}(\alpha) = \entropy_{GR}(\alpha,p(x),x\in\R)
    = - \frac{\int_\R p^\alpha(x)  \log p(x) dx}{\int_\R p^\alpha(x) dx},
    \end{gather}
     respectively. 
\item
    The Tsallis entropy with parameter $\alpha > 0$, $\alpha \neq 1 $,  of a discrete distribution $\{p_i,i\geq0\}$ (of a distribution with the density $p(x),x\in\R$) is given by the formulas
    \begin{gather}
   \entropy_T(\alpha) = \entropy_T(\alpha, p_i, i\geq0)
   = \frac{1}{1-\alpha} \left( \sum_{i = 0}^{\infty} p_i^\alpha - 1 \right),
   \notag
   \\\label{eq:Tdens}
   \entropy_T(\alpha) = \entropy_T(\alpha,p(x),x\in\R)
   = \frac{1}{1-\alpha} \left(\int_\R p^\alpha(x) dx - 1 \right),
    \end{gather}
 respectively. 
\item
    The generalized R\'enyi entropy with parameters $\alpha \neq \beta$, $\alpha > 0$, $\beta > 0$  of a discrete distribution $\{p_i,i\geq0\}$ (of a distribution with the density $p(x),x\in\R$) is given by the formulas
    \begin{gather}
    \entropy_{GR}(\alpha, \beta) = \entropy_{GR}(\alpha, \beta, p_i, i\geq0)
    = \frac{1}{\beta - \alpha}  \log\left(\frac{\sum_{i = 0}^{\infty} p_i^\alpha }{\sum_{i = 0}^{\infty} p_i^\beta}\right),
    \notag
    \\
    \label{eq:GR1dens}
    \entropy_{GR}(\alpha, \beta) = \entropy_{GR}(\alpha, \beta,p(x),x\in\R)
    = \frac{1}{\beta - \alpha}  \log\left(\frac{\int_\R p^\alpha(x) dx}{\int_\R p^\beta(x) dx}\right),
    \end{gather}
     respectively. 
\item
    The Sharma--Mittal entropy with parameters $\alpha > 0$, $\beta > 0$, $\alpha \neq 1$, $\beta \neq 1 $,  of a discrete distribution $\{p_i,i\geq0\}$ (of a distribution with the density $p(x),x\in\R$) is given by the formulas
    \begin{gather}
    \entropy_{SM}(\alpha, \beta) = \entropy_{SM}(\alpha, \beta, p_i, i\geq0)
    = \frac{1}{1-\beta} \left( \left( \sum_{i = 0}^{\infty} p_i^\alpha \right)^{\frac{1- \beta}{1 - \alpha}} - 1 \right),
    \notag
    \\\label{eq:SMdens}
\entropy_{SM}(\alpha, \beta) = \entropy_{SM}(\alpha, \beta,p(x),x\in\R)
    = \frac{1}{1-\beta} \left( \left(\int_\R p^\alpha(x) dx\right)^{\frac{1- \beta}{1 - \alpha}} - 1 \right),
    \end{gather}
   respectively. 
 \item The Kullback--Leibler divergence (relative entropy, Shannon relative entropy) between  two discrete probability distributions $\mathbf{P}$ and $\mathbf{Q}$ with probabilities $p_i,q_i, i\ge 0$ (between two probability distributions $\mathbf{P}$ and $\mathbf{Q}$ with densities $p(x),q(x), x\in \R$)  is given by the formulas
 $$\entropy_{SH}(\mathbf{P}||\mathbf{Q})=\sum_{i = 0}^{\infty} p_i\log\left(\frac{p_i}{q_i}\right),\quad
  \entropy_{SH}(\mathbf{P}||\mathbf{Q})=\int_\R p(x)\log\left(\frac{p(x)}{q(x)}\right) dx, $$ 
 respectively. 
\end{enumerate}
\end{definition}
\begin{remark}\label{remkulb} Recall that in any case it follows from the Jensen inequality that $$\entropy_{SH}(\mathbf{P}||\mathbf{Q})=\mathbf{E}_\mathbf{Q}\left(\frac{d\mathbf{P}}{d\mathbf{Q}}\log\left(\frac{d\mathbf{P}}{d\mathbf{Q}}\right)\right)\ge\mathbf{E}_\mathbf{Q}\left(\frac{d\mathbf{P}}{d\mathbf{Q}}\right)\log\left(\mathbf{E}_\mathbf{Q}\left(\frac{d\mathbf{P}}{d\mathbf{Q}}\right)\right) =0,$$
with the equality if and only if $\mathbf{Q}=\mathbf{P}.$
\end{remark} 
\subsection{Modified Shannon entropy  for the distributions with bounded density}\label{modify}
If to compare the formulas for the Shannon entropy for discrete distribution,
$\entropy_{SH} (p_i,i\ge 0)=-\sum p_i\log p_i$ and for distribution with density,
$\entropy_{SH} (p(x),$ $x\in\R)=-\int p(x)\log p(x) dx$,
they have distinct behaviour in the sense that $\entropy_{SH} (p_i,i\ge 0)$ is always strictly positive while $\entropy_{SH} (p(x),x\in\R)$ can be negative, zero or positive.
To avoid this disadvantage, we can introduce the modified Shannon entropy which will be strictly positive.

\begin{definition}
    Let $p=\{p(x),\ x\in\R\}$ is a bounded density of a probability distribution, and $M=\sup\limits_{x\in \R}p(x)<\infty$.
    Put $\widetilde{p}(x)=\dfrac{p(x)}{M}$. Then the modified Shannon entropy is defined as
    \begin{gather*}
    \entropy_{SH, M} (p(x),x\in\R)=-\int \widetilde{p}(x)\log \widetilde{p}(x) dx.
    \end{gather*}
\end{definition}

The following result is straightforward.
\begin{proposition}\label{Prop1}
    The modified Shannon entropy is nonnegative and equals
    \begin{equation}\label{modif}
       \entropy_{SH, M} (p(x),x\in\R)=\frac{1}{M}\entropy_{SH} (p(x),x\in\R)+\frac{\log M}{M}.
    \end{equation}
\end{proposition}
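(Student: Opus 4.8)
The plan is to prove the identity \eqref{modif} by a direct substitution and then read off nonnegativity from the pointwise sign of the integrand; the two parts of the statement are in fact established together.

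First I would expand the definition of $\entropy_{SH,M}$ using $\widetilde p(x)=p(x)/M$ and $\log\widetilde p(x)=\log p(x)-\log M$. Substituting and using linearity of the integral,
\[
\entropy_{SH,M}(p(x),x\in\R)=-\int_\R\frac{p(x)}{M}\bigl(\log p(x)-\log M\bigr)\,dx=-\frac1M\int_\R p(x)\log p(x)\,dx+\frac{\log M}{M}\int_\R p(x)\,dx.
\]
Since $p$ is a probability density, $\int_\R p(x)\,dx=1$, and by \eqref{eq:SHdens} the first term equals $\frac1M\entropy_{SH}(p(x),x\in\R)$, which gives \eqref{modif}. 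The only point deserving a word of care is the legitimacy of splitting the integral: this is fine because the integrand $-\widetilde p\log\widetilde p$ has a constant (nonnegative) sign, so the integral is a well-defined element of $[0,+\infty]$ and no indeterminate $\infty-\infty$ arises.

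For the nonnegativity, the key observation is that $0<\widetilde p(x)\le 1$ for every $x\in\R$, since $p(x)\le M=\sup_{y\in\R}p(y)$. The elementary function $t\mapsto -t\log t$ is nonnegative on $[0,1]$, vanishing exactly at $t=0$ and $t=1$; hence $-\widetilde p(x)\log\widetilde p(x)\ge 0$ pointwise, and integrating a nonnegative function yields $\entropy_{SH,M}(p(x),x\in\R)\ge 0$.

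There is essentially no obstacle here — the statement is a one-line change of variables together with the sign of $-t\log t$ on the unit interval; the mild subtlety is simply ensuring the possibly infinite integral defining $\entropy_{SH,M}$ is meaningful, which is exactly what the pointwise nonnegativity of the integrand guarantees.
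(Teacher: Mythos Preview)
Your proof is correct and is exactly the direct computation the paper has in mind; the paper itself offers no proof beyond declaring the result ``straightforward.'' The only tiny quibble is that $\widetilde p(x)$ need not be strictly positive (the density may vanish on a set), but with the usual convention $0\log 0=0$ your pointwise inequality $-\widetilde p\log\widetilde p\ge 0$ still holds and the argument goes through unchanged.
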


\section{Calculation of entropies of gamma,   exponential, chi-squared, Laplace and log-normal   distributions}\label{expparam}
Let us calculate all introduce entropies for several distributions: gamma, exponential, chi-squared, Laplace and log-normal. Also, we provide some values of the modified Shannon entropy. 
\subsection{Gamma distribution: entropies and the Kullback--Leibler divergence}\label{gamma}
\begin{definition}
    A random variable has gamma distribution with parameters $\mu>0$ and $\lambda>0$ if its density has a form
    $$
    p(x)=\frac{\lambda^\mu}{\Gamma(\mu)}x^{\mu-1}e^{-\lambda x}\1_{(0,+\infty)}(x).
    $$
\end{definition}
\begin{proposition}\label{gamma-entropies}
\begin{enumerate}[{1)}]
    \item The Shannon entropy of the gamma distribution is given by the formula 
    \begin{equation}\label{eq:gamsh}
        \entropy_{SH}^{(gamma)}(\lambda,\mu)=-\log \lambda + \log \Gamma(\mu)+\mu
        -\psi(\mu)(\mu-1),
    \end{equation}
    where $\psi(\mu)=\frac{\Gamma'(\mu)}{\Gamma(\mu)}$ denotes the digamma function.
    
\item The R\'enyi entropy with parameter $\alpha>0$, $\alpha \ne 1$, of the gamma distribution is well-defined for  any $\alpha>0$ if $\mu\ge 1$ and for $0<\alpha<\frac{1}{1-\mu}$
if $0<\mu< 1$. In other words, it is defined for $\alpha(\mu-1)>-1$. 
For these parameter values, the R\'enyi entropy is given by the formula 
    \begin{gather}\label{eq:gamr}
        \entropy_{R}^{(gamma)} (\alpha,\lambda,\mu)
        =-\log \lambda -\frac{1}{1-\alpha}
        \log\frac{\Gamma^\alpha(\mu)}{\Gamma(\alpha(\mu-1)+1)}
        -\frac{1-\alpha+\alpha\mu}{1-\alpha}\log\alpha.
    \end{gather}

\item The generalized R\'enyi entropy with parameter $\alpha$ of the gamma distribution is well-defined under the same conditions on $\alpha$ and $\mu$ as for the R\'enyi entropy.
It is expressed as
    \begin{multline}
        \entropy_{GR}^{(gamma)} (\alpha,\lambda,\mu) =-\log\lambda +\log\Gamma(\mu)+(\mu-1)\log\alpha\\
        - (\mu-1)\psi(\alpha(\mu-1)+1)+\mu-1+\frac{1}{\alpha}.\label{eq:gamgr}
    \end{multline}

\item
The Tsallis entropy with parameter $\alpha>0$, $\alpha\neq 1$ of the gamma distribution is given by the formula 
    \begin{equation*}
    \entropy_{T}^{(gamma)}(\alpha,\lambda,\mu)
    =\frac{1}{1-\alpha}\left(\lambda^{\alpha-1}
    \frac{\alpha^{\alpha(1-\mu)-1}\Gamma(\alpha(\mu-1)+1)}{\Gamma^\alpha(\mu)}-1\right).
    \end{equation*}

\item
The generalized R\'{e}nyi entropy with parameters $\alpha,\beta>0, \alpha\neq \beta$ of the gamma distribution is well-defined for $\min{(\alpha(\mu-1),\beta(\mu-1))}>-1$ and is given by the formula 
\begin{gather*}
\entropy_{GR}^{(gamma)}(\alpha,\beta,\lambda,\mu)
=\frac{1}{\beta-\alpha}\log\left(\lambda^{\alpha-\beta}
\frac{\beta^{\beta(\mu-1)+1}}{\alpha^{\alpha(\mu-1)+1}}
\Gamma^{\beta-\alpha}(\mu)
\frac{\Gamma(\alpha(\mu-1)+1)}{\Gamma(\beta(\mu-1)+1)}\right)\nonumber\\
=-\log\lambda+\frac{1}{\beta-\alpha}\log\left(
\frac{\beta^{\beta(\mu-1)+1}}{\alpha^{\alpha(\mu-1)+1}}
\frac{\Gamma(\alpha(\mu-1)+1)}{\Gamma(\beta(\mu-1)+1)}\right)
+\log\Gamma(\mu).
\end{gather*}

\item
The Sharma--Mittal entropy with parameters $\alpha,\beta>0, \alpha\neq \beta, \alpha\neq 1, \beta\neq1$ of the gamma distribution is well-defined for
$\alpha(\mu-1)>-1$  and is given by the formula 
\begin{equation*}
\entropy_{SM}^{(gamma)}(\alpha,\beta,\lambda,\mu)
= \frac{1}{1-\beta}\left(
\frac{\lambda^{\beta-1}\alpha^{\frac{(\alpha(1-\mu)-1)(1-\beta)}{1-\alpha}}}{\Gamma^{\frac{\alpha(1-\beta)}{1-\alpha}}(\mu)}
\Gamma^{\frac{1-\beta}{1-\alpha}}(\alpha(\mu-1)+1)
-1\right).
\end{equation*}

\item
The  Kullback--Leibler divergence (relative entropy) of two gamma distributions is given by the formula 
\begin{equation*}
\entropy_{KL}^{(gamma)} (\lambda,\lambda_1,\mu,\mu_1)
=\mu_1\log\frac{\lambda}{\lambda_1}+\mu\left(\frac{\lambda_1}{\lambda}-1\right)+\log\frac{\Gamma(\mu_1)}{\Gamma(\mu)}+(\mu-\mu_1)\psi(\mu).
\end{equation*}
\end{enumerate}
\end{proposition}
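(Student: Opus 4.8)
The plan is to reduce all seven formulas to two elementary integrals attached to the gamma density $p(x)=\frac{\lambda^\mu}{\Gamma(\mu)}x^{\mu-1}e^{-\lambda x}$, $x>0$, letting $X$ denote a random variable with this density. First I would compute the power integral: for $\alpha>0$, the substitution $y=\alpha\lambda x$ gives
$$
I(\alpha):=\int_0^\infty p^\alpha(x)\,dx=\frac{\lambda^{\alpha\mu}}{\Gamma^\alpha(\mu)}\int_0^\infty x^{\alpha(\mu-1)}e^{-\alpha\lambda x}\,dx=\frac{\lambda^{\alpha-1}}{\Gamma^\alpha(\mu)}\cdot\frac{\Gamma(\alpha(\mu-1)+1)}{\alpha^{\alpha(\mu-1)+1}},
$$
the last equality being the definition of $\Gamma$ and valid exactly when $\alpha(\mu-1)+1>0$, i.e.\ precisely when $p^\alpha$ is integrable near the origin; this is the stated domain. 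From $I(\alpha)$ the R\'enyi entropy ($\tfrac{1}{1-\alpha}\log I(\alpha)$), the Tsallis entropy ($\tfrac{1}{1-\alpha}(I(\alpha)-1)$), the two-parameter generalized R\'enyi entropy ($\tfrac{1}{\beta-\alpha}\log(I(\alpha)/I(\beta))$) and the Sharma--Mittal entropy ($\tfrac{1}{1-\beta}(I(\alpha)^{(1-\beta)/(1-\alpha)}-1)$) all follow by direct substitution and routine simplification, using $\alpha(\mu-1)+1=1-\alpha+\alpha\mu$ and $\alpha^{-(\alpha(\mu-1)+1)}=\alpha^{\alpha(1-\mu)-1}$.

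Next, for the quantities built from $-\ex[\log p(X)]$ I would use $\log p(x)=\mu\log\lambda-\log\Gamma(\mu)+(\mu-1)\log x-\lambda x$, which reduces everything to the two moments $\ex[X]=\mu/\lambda$ and $\ex[\log X]=\psi(\mu)-\log\lambda$. The first is classical; the second I would obtain from $\ex[X^t]=\Gamma(\mu+t)/(\lambda^t\Gamma(\mu))$ by differentiating at $t=0$ (or, equivalently, by differentiating $\int_0^\infty p(x)\,dx=1$ in $\mu$). Substituting into $\entropy_{SH}=-\mu\log\lambda+\log\Gamma(\mu)-(\mu-1)\ex[\log X]+\lambda\ex[X]$ and collecting the $\log\lambda$ terms yields \eqref{eq:gamsh}. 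The Kullback--Leibler formula is the same computation: expand $\log(p/q)$ with $q$ the density for parameters $(\lambda_1,\mu_1)$, take $\ex_P$, apply the same two moments under $P$, and collect $\log\lambda$ and constant terms.

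For the one-parameter generalized R\'enyi entropy I would avoid computing $\int p^\alpha(x)\log p(x)\,dx$ directly (which would need $\int p^\alpha(x)\log x\,dx$, a $\Gamma$-function derivative) and instead note that $\frac{d}{d\alpha}I(\alpha)=\int p^\alpha(x)\log p(x)\,dx$, so by \eqref{eq:GR2dens} $\entropy_{GR}(\alpha)=-I'(\alpha)/I(\alpha)=-\frac{d}{d\alpha}\log I(\alpha)$. Differentiating the closed form of $\log I(\alpha)$ from the first step in $\alpha$ — here the digamma function enters via $\frac{d}{d\alpha}\log\Gamma(\alpha(\mu-1)+1)=(\mu-1)\psi(\alpha(\mu-1)+1)$ — and simplifying $(\alpha(\mu-1)+1)/\alpha=(\mu-1)+1/\alpha$ gives \eqref{eq:gamgr}.

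I do not expect a serious obstacle: the argument is a chain of substitutions plus one differentiation. The only points needing care are (i) pinning down the convergence conditions — the behaviour of $p^\alpha$ near $x=0$ forces $\alpha(\mu-1)>-1$, and for the two-parameter generalized R\'enyi entropy one needs this for both $\alpha$ and $\beta$, i.e.\ $\min(\alpha(\mu-1),\beta(\mu-1))>-1$ — and (ii) justifying the two differentiations under the integral sign (for $\ex[\log X]$ and for $\entropy_{GR}(\alpha)$), which are licensed by dominated convergence once the relevant parameter is restricted to a compact subinterval of its region of convergence. The rest is bookkeeping of the algebraic simplifications.
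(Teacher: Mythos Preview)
Your proposal is correct and, for items 1), 2), 4)--7), essentially matches the paper: the paper also reduces everything to the closed form of $J=\int_0^\infty p^\alpha(x)\,dx$ (your $I(\alpha)$) and to the identity $\int_0^\infty y^{\mu-1}e^{-y}\log y\,dy=\Gamma'(\mu)$, which is the same as your $\ex[\log X]=\psi(\mu)-\log\lambda$ just packaged as direct integrals rather than moments.

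The one genuine difference is item 3). The paper computes $J_1=\int_0^\infty p^\alpha(x)\log p(x)\,dx$ head-on, splitting $\log p$ into its three summands and evaluating each piece via the substitution $y=\lambda\alpha x$; the $\log x$ piece produces $\Gamma'(\alpha(\mu-1)+1)$, and then $\entropy_{GR}=-J_1/J$ is simplified by hand. Your route, observing that $\entropy_{GR}(\alpha)=-I'(\alpha)/I(\alpha)=-\tfrac{d}{d\alpha}\log I(\alpha)$ and simply differentiating the already-known closed form of $\log I(\alpha)$, is shorter and avoids the separate integral calculation; the digamma appears via $\tfrac{d}{d\alpha}\log\Gamma(\alpha(\mu-1)+1)$ rather than via $\Gamma'/\Gamma$ inside an integral. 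The cost is the need to justify differentiation under the integral sign (which you flag), whereas the paper's direct computation needs no such analytic step. Both land on \eqref{eq:gamgr} with the same amount of algebra at the end.
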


\begin{remark}\label{gammakulb} 
According to Remark \ref{remkulb}, the expression
\begin{equation}
\mu_1\log\frac{\lambda}{\lambda_1} + \mu\left(\frac{\lambda_1}{\lambda} - 1\right) + \log\frac{\Gamma(\mu_1)}{\Gamma(\mu)} + (\mu - \mu_1)\psi(\mu) \geq 0
\label{kulkul}
\end{equation}
holds for any strictly positive parameters and equals zero if and only if 
$\lambda=\lambda_1$ and $\mu=\mu_1$. 
This result can be established analytically as follows.
First, let 
$x = \frac{\lambda}{\lambda_1}$, and consider the function
$f(x) = \mu_1\log x+\mu\left(\frac{1}{x}-1\right)$, $x>0$,
which achieves its unique minimum at 
$x = \frac{\mu}{\mu_1}$. This can be verified by straightforward differentiation. Substituting this result leads to the inequality
\begin{multline}\label{gammakulb-proof1}
\mu_1\log\frac{\lambda}{\lambda_1} + \mu\left(\frac{\lambda_1}{\lambda} - 1\right) + \log\frac{\Gamma(\mu_1)}{\Gamma(\mu)} + (\mu - \mu_1)\psi(\mu)
\\*
\geq \mu_1 \log\frac{\mu}{\mu_1} + \mu\left(\frac{\mu_1}{\mu} - 1\right) + \log\frac{\Gamma(\mu_1)}{\Gamma(\mu)} + (\mu - \mu_1)\psi(\mu) \eqqcolon g(\mu, \mu_1), 
\end{multline} 
which becomes an equality if and only if 
$\frac{\lambda}{\lambda_1} = \frac{\mu}{\mu_1}$.
To establish the non-negativity of the function $g(\mu, \mu_1)$, the right-hand side of \eqref{gammakulb-proof1}, we minimize it with respect to $\mu$.
The partial derivative is given by 
\begin{equation}\label{gammakulb-proof2} \frac{\partial g(\mu, \mu_1)}{\partial \mu} = \frac{\mu_1}{\mu} - 1 - \psi(\mu) + \psi(\mu) + (\mu - \mu_1)\psi'(\mu) = (\mu - \mu_1)\left(\psi'(\mu) - \frac{1}{\mu}\right).
\end{equation}
Next, note that $\psi'(\mu) - \frac{1}{\mu} > 0$, because, using the series representation for the trigamma function $\psi'$ (see \cite[formula 8.363.8]{GradshteynRyzhik2007}), we get
\begin{align*}
\psi'(\mu) &= \frac{1}{\mu^2} + \frac{1}{(\mu+1)^2} + \frac{1}{(\mu+2)^2} + \ldots \\
&> \int_{\mu}^{\mu+1}\frac{\mathrm{d}x}{x^2} + \int_{\mu+1}^{\mu+2}\frac{\mathrm{d}x}{x^2} + \int_{\mu+2}^{\mu+3}\frac{\mathrm{d}x}{x^2} + \ldots 
= \int_{\mu}^{\infty}\frac{\mathrm{d}x}{x^2} = \frac{1}{\mu}. 
\end{align*}
Therefore, from \eqref{gammakulb-proof2}, we conclude that 
\[ 
\min_{\mu > 0} g(\mu, \mu_1) = g(\mu_1, \mu_1) = 0,
\]
which completes the proof of \eqref{kulkul}.
\end{remark}

\begin{proof}[Proof of Proposition \ref{gamma-entropies}]
    1) Substituting the density of the gamma distribution into the definition of the Shannon entropy, we get 
    \begin{gather*}
        \entropy_{SH}^{(gamma)}(\lambda,\mu)
        =-\frac{\lambda^\mu}{\Gamma(\mu)}\int_{0}^{\infty}x^{\mu-1}e^{-\lambda x}\log\left(\frac{\lambda^\mu}{\Gamma(\mu)}x^{\mu-1}e^{-\lambda x}\right)dx
        =I_1+I_2+I_3,
    \end{gather*}
where
    \begin{align}
        I_1&=-\frac{\lambda^\mu}{\Gamma(\mu)}\int_{0}^{\infty}x^{\mu-1}e^{-\lambda x}\log\left(\frac{\lambda^\mu}{\Gamma(\mu)}\right)dx
        =-\log\left(\frac{\lambda^\mu}{\Gamma(\mu)}\right),\label{eq:gamshI1}
        \\
        I_2&=-\frac{\lambda^\mu (\mu-1)}{\Gamma(\mu)}\int_{0}^{\infty}x^{\mu-1}e^{-\lambda x}\log{x}\, dx\nonumber\\
        &= -\frac{\mu-1}{\Gamma(\mu)}
        \int_{0}^{\infty}y^{\mu-1}e^{-y}(\log{y}-\log{\lambda})\, dy\nonumber\\
        &=-\frac{\mu-1}{\Gamma(\mu)}
        \int_{0}^{\infty}y^{\mu-1}e^{-y}\log{y}\, dy +(\mu-1)\log{\lambda} \nonumber\\
        &=-(\mu-1)\frac{\Gamma'(\mu)}{\Gamma(\mu)}+(\mu-1)\log{\lambda}
        =-(\mu-1)\psi(\mu)+(\mu-1)\log{\lambda},\label{eq:gamshI2}
    \end{align}
and
    \begin{equation}\label{eq:gamshI3}
        I_3=-\frac{\lambda^\mu}{\Gamma(\mu)}\int_{0}^{\infty}x^{\mu-1}e^{-\lambda x}(-\lambda x)dx
        =\frac{1}{\Gamma(\mu)}\Gamma(\mu+1)=\mu.
    \end{equation}
Now,  \eqref{eq:gamsh} immediately follows from \eqref{eq:gamshI1}--\eqref{eq:gamshI3}.

    2) Let $\mu\ge 1,\ \alpha>0$ or
    $0<\mu <1,\ 0<\alpha<\frac{1}{1-\mu}$. First, let us calculate \begin{align*}
        J &:= \int_0^{\infty}p^\alpha(x)dx 
        =\frac{\lambda^{\mu\alpha}}{\Gamma^\alpha(\mu)}\int_0^{\infty} x^{\alpha(\mu-1)} e^{-\lambda\alpha x}dx
        \notag\\
        &=\frac{\lambda^{\mu\alpha}}{\Gamma^\alpha(\mu)}
        (\lambda\alpha)^{-\alpha(\mu-1)-1}
        \int_0^{\infty} y^{\alpha(\mu-1)} e^{-y}dy
        \notag\\
        &=\frac{\lambda^{\mu\alpha}}{\Gamma^\alpha(\mu)}
        (\lambda\alpha)^{-\alpha(\mu-1)-1}
        \Gamma(\alpha(\mu-1)+1)
        =\lambda^{\alpha-1}\frac{\alpha^{-\alpha(\mu-1)-1}\Gamma(\alpha(\mu-1)+1)} {\Gamma^\alpha(\mu)}.
        \label{mainint}
    \end{align*}
Then
    \begin{gather*}
        \entropy_{R}^{(gamma)}(\alpha,\lambda,\mu)
        =\frac{1}{1-\alpha}\log\left(\int_0^{\infty}p^\alpha(x)dx\right)
        \\=-\log\lambda-\frac{\alpha}{1-\alpha}\log\Gamma(\mu)
        +\frac{1}{1-\alpha}\log\left(\Gamma(\alpha(\mu-1)+1)\right)
        -\frac{\alpha(\mu-1)+1}{1-\alpha}\log\alpha,
    \end{gather*}
and we get  \eqref{eq:gamr}.

    3) Let us calculate 
    $J_1=\int_0^{\infty}p^\alpha(x)\log p(x) dx$.  
Similarly to the proof of the statement 1), we can decompose $J_1$ as follows: $J_1=L_1+L_2+L_3$, where
    \begin{align*}
        L_1&=\frac{\lambda^{\mu\alpha}}{\Gamma^\alpha(\mu)}
        \log\frac{\lambda^{\mu}}{\Gamma(\mu)}
        \int_0^{\infty}x^{\alpha(\mu-1)} e^{-\lambda\alpha x}dx
        = J \log\frac{\lambda^{\mu}}{\Gamma(\mu)},\\
        L_2&=\frac{(\mu-1)\lambda^{\mu\alpha}}{\Gamma^\alpha(\mu)}
        \int_0^{\infty}x^{\alpha(\mu-1)} e^{-\lambda\alpha x}\log x dx\\
        &=\frac{(\mu-1)\lambda^{\mu\alpha}}{\Gamma^\alpha(\mu)}
        (\lambda\alpha)^{-\alpha(\mu-1)-1}
        \int_0^{\infty}y^{\alpha(\mu-1)} e^{-y}\log ydy\\
        &\quad-\frac{(\mu-1)\lambda^{\mu\alpha}}{\Gamma^\alpha(\mu)}\log(\lambda\alpha)
        (\lambda\alpha)^{-\alpha(\mu-1)-1}
        \int_0^{\infty}y^{\alpha(\mu-1)} e^{-y} dy\\
        &=\frac{(\mu-1)\lambda^{\mu\alpha}}{\Gamma^\alpha(\mu)}
        (\lambda\alpha)^{-\alpha(\mu-1)-1}
        \Gamma'(\alpha(\mu-1)+1)\\
        &\quad-\frac{(\mu-1)\lambda^{\mu\alpha}}{\Gamma^\alpha(\mu)}\log(\lambda\alpha)
        (\lambda\alpha)^{-\alpha(\mu-1)-1}
        \Gamma(\alpha(\mu-1)+1),
    \end{align*}
and
    \begin{gather*}
        L_3=-\frac{\lambda^{\mu\alpha+1}}{\Gamma^\alpha(\mu)}
        \int_0^{\infty}x^{\alpha(\mu-1)+1} e^{-\lambda\alpha x} dx
        =-\frac{\lambda^{\mu\alpha+1}}{\Gamma^\alpha(\mu)}
        (\lambda\alpha)^{-\alpha(\mu-1)-2}
        \Gamma(\alpha(\mu-1)+2).
    \end{gather*}
Then
    \begin{align*}
        \MoveEqLeft\entropy_{GR}^{(gamma)} (\alpha,\lambda,\mu) = -\frac{J_1}{J}\\*
        &=-\log\frac{\lambda^{\mu}}{\Gamma(\mu)}
        -(\mu-1)\frac{\Gamma'(\alpha(\mu-1)+1)}{\Gamma(\alpha(\mu-1)+1)}
        +(\mu-1)\log(\lambda\alpha)
        +\frac{\alpha(\mu-1)+1}{\alpha}\\
        &=-\log\lambda+\log\Gamma(\mu)+(\mu-1)\log\alpha
        -(\mu-1)\psi(\alpha(\mu-1)+1)+\mu-1+\frac{1}{\alpha},
    \end{align*}
which completes the proof of \eqref{eq:gamgr}.

    The statements 4)--6) immediately follow from \eqref{mainint}.

    7) We can proceed with the following relations
    \begin{align*}
    \entropy_{KL}^{(gamma)} (\lambda,\lambda_1,\mu,\mu_1)
    &=\frac{\lambda^\mu}{\Gamma(\mu)}
    \int_0^\infty x^{\mu-1}e^{-\lambda x}
    \log\left(\frac{\lambda^\mu}{\lambda_1^{\mu_1}}
    \frac{\Gamma(\mu_1)}{\Gamma(\mu)} x^{\mu-\mu_1} e^{-(\lambda-\lambda_1)x}
    \right)dx\\
    &=M_1+M_2+M_3,
    \end{align*}
where
\begin{align*}
    M_1&=\log\frac{\lambda^\mu \Gamma(\mu_1)}{\lambda_1^{\mu_1} \Gamma(\mu)},\\
    M_2&=\frac{(\mu-\mu_1)\lambda^\mu}{\Gamma(\mu)}
    \int_0^\infty x^{\mu-1}e^{-\lambda x}\log x dx
    =(\mu-\mu_1)\frac{\Gamma'(\mu)}{\Gamma(\mu)}
    -(\mu-\mu_1)\log\lambda\\
    &=(\mu-\mu_1)\psi(\mu)-(\mu-\mu_1)\log\lambda,\\
    M_3&=-(\lambda-\lambda_1)\frac{\lambda^\mu}{\Gamma(\mu)}
    \int_0^\infty x^{\mu}e^{-\lambda x}dx
    =-\left(1-\frac{\lambda_1}{\lambda}\right)\frac{\Gamma(\mu+1)}{\Gamma(\mu)}
    =-\mu\left(1-\frac{\lambda_1}{\lambda}\right),
\end{align*}
whence the proof follows.
\end{proof}

\subsection{Exponential distribution: entropies and the Kullback--Leibler divergence}\label{exp}
\begin{definition}
    A random variable $X_{\lambda}^{(exp)}$ has exponential distribution with parameter $\lambda>0$,
    if its density has a form 
    \begin{gather*}
    p_{\lambda}(x)=\lambda e^{-\lambda x}\1_{(0,+\infty)}(x).
    \end{gather*}    
\end{definition}

Recall that the exponential distribution is a special case of the gamma distribution with $\mu = 1$. Consequently, as a corollary of Proposition~\ref{gamma-entropies}, we obtain the following result.
\begin{proposition}
\begin{enumerate}[1)]
\item
The Shannon entropy of the exponential distribution equals
\begin{equation}
\entropy_{SH}^{(exp)} (\lambda)
=\log{\frac{e}{\lambda}}=-\log\lambda+1.\label{eq:eshexp}
\end{equation}

\item
The R\'{e}nyi entropy with parameter $\alpha>0$ of the exponential distribution equals
\begin{equation*}
\entropy_{R}^{(exp)}(\alpha,\lambda)
=\frac{1}{1-\alpha} \log\left(\frac{\lambda^{\alpha-1}}{\alpha}\right)=-\log\lambda+\frac{\log\alpha}{\alpha-1}.
\end{equation*}

\item
The generalized R\'{e}nyi entropy with parameter $\alpha>0$ of the exponential distribution equals
\begin{equation*}
\entropy_{GR}^{(exp)}(\alpha,\lambda)
=-\log\lambda+\frac{1}{\alpha}.
\end{equation*}

\item
The Tsallis entropy with parameter $\alpha>0$, $\alpha\neq 1$ of the exponential distribution equals
\begin{equation}
\entropy_{T}^{(exp)}(\alpha,\lambda)
=\frac{\lambda^{\alpha-1}-\alpha}{\alpha(1-\alpha)}.\label{eq:etexp}
\end{equation}

\item
The generalized R\'{e}nyi entropy with parameters $\alpha,\beta>0$, $\alpha\neq \beta$ of the exponential distribution equals
\begin{equation*}
\entropy_{GR}^{(exp)}(\alpha,\beta,\lambda)
=-\log\lambda+\frac{1}{\beta-\alpha}\log\frac{\beta}{\alpha}.
\end{equation*}

\item
The Sharma--Mittal entropy with parameters $\alpha,\beta>0, \alpha\neq \beta, \alpha\neq 1, \beta\neq1$ of the exponential distribution equals
\begin{align}
\entropy_{SM}^{(exp)}(\alpha,\beta,\lambda)
=\frac{1}{1-\beta}\left(\lambda^{\beta-1}\alpha^{\frac{1-\beta}{\alpha-1}}-1\right). \label{eq:esmexp2}
\end{align}

\item
The  Kullback--Leibler divergence (relative entropy)  of two exponential distributions equals
\begin{equation*}
\entropy_{KL}^{(exp)} (\lambda,\lambda_1)
=\log\left(\frac{\lambda}{\lambda_1}\right) + \frac{\lambda_1}{\lambda} - 1,
\end{equation*}
and $\entropy_{KL}^{(exp)} (\lambda_1,\lambda_1)=0$.
\end{enumerate}
\end{proposition}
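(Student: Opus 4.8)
\emph{Strategy.} The exponential density $p_\lambda(x)=\lambda e^{-\lambda x}\1_{(0,+\infty)}(x)$ is exactly the gamma density with $\mu=1$, so the plan is to read off every item as the specialization $\mu=1$ of the corresponding formula in Proposition~\ref{gamma-entropies}. Only two elementary facts are needed for the simplifications: $\Gamma(1)=1$ (hence $\log\Gamma(1)=0$), and the vanishing of the factor $\mu-1$ at $\mu=1$ together with $\alpha(\mu-1)+1=1$ there, so that every $\Gamma$- or $\psi$-argument occurring in the gamma formulas collapses either to $\Gamma(1)=1$ or to a $\psi(1)$ multiplied by $0$. In particular the well-definedness conditions of Proposition~\ref{gamma-entropies}, which all amount to $\alpha(\mu-1)>-1$ (resp.\ the analogue with $\beta$), reduce to $0>-1$ and are therefore automatically satisfied for every admissible $\alpha,\beta$; this is why the exponential R\'enyi, Tsallis and Sharma--Mittal entropies are defined without any restriction on the parameters.

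\emph{Carrying out the substitutions.} Setting $\mu=1$ in \eqref{eq:gamsh} annihilates $\log\Gamma(\mu)$ and $(\mu-1)\psi(\mu)$ and leaves $-\log\lambda+1=\log(e/\lambda)$, which is \eqref{eq:eshexp}. Substituting $\mu=1$ into \eqref{eq:gamr} removes the middle logarithm, since its argument becomes $\Gamma(1)^\alpha/\Gamma(1)=1$, and turns the last coefficient $\frac{1-\alpha+\alpha\mu}{1-\alpha}$ into $\frac{1}{1-\alpha}$, giving $-\log\lambda-\frac{1}{1-\alpha}\log\alpha$, which one rewrites as $-\log\lambda+\frac{\log\alpha}{\alpha-1}=\frac{1}{1-\alpha}\log(\lambda^{\alpha-1}/\alpha)$. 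Substituting $\mu=1$ into \eqref{eq:gamgr} kills the term $\log\Gamma(\mu)$ together with the three terms carrying a factor $\mu-1$, leaving $-\log\lambda+\frac{1}{\alpha}$. For the Tsallis entropy one uses that $\alpha^{\alpha(1-\mu)-1}$ evaluated at $\mu=1$ equals $\alpha^{-1}$, and $\Gamma(1)=1$, to obtain $\frac{1}{1-\alpha}(\lambda^{\alpha-1}/\alpha-1)=\frac{\lambda^{\alpha-1}-\alpha}{\alpha(1-\alpha)}$; the two-parameter generalized R\'enyi formula at $\mu=1$ becomes $-\log\lambda+\frac{1}{\beta-\alpha}\log(\beta/\alpha)$, since the ratio of the two powers $\beta^{\,\cdots}/\alpha^{\,\cdots}$ collapses to $\beta/\alpha$ and the $\Gamma$-ratio to $1$; and in the Sharma--Mittal formula at $\mu=1$ the exponent of $\alpha$ equals $\frac{(-1)(1-\beta)}{1-\alpha}=\frac{1-\beta}{\alpha-1}$ while every $\Gamma$-factor equals $1$, yielding $\frac{1}{1-\beta}(\lambda^{\beta-1}\alpha^{\frac{1-\beta}{\alpha-1}}-1)$.

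\emph{Kullback--Leibler divergence.} Put $\mu=\mu_1=1$ in the gamma formula of Proposition~\ref{gamma-entropies}: the terms $\log\frac{\Gamma(\mu_1)}{\Gamma(\mu)}$ and $(\mu-\mu_1)\psi(\mu)$ vanish and the remainder is $\log\frac{\lambda}{\lambda_1}+\frac{\lambda_1}{\lambda}-1$; specializing further to $\lambda=\lambda_1$ gives $\log 1+1-1=0$.

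\emph{On the difficulty.} There is no genuine obstacle here: the statement is a direct corollary, each step being a one-line algebraic reduction, and the only thing requiring care is the bookkeeping of the exponents of $\lambda$ and $\alpha$ after the substitution and the observation that the parameter restrictions of Proposition~\ref{gamma-entropies} become vacuous at $\mu=1$. If a self-contained argument is preferred, each identity can equally be obtained by inserting $p_\lambda$ directly into Definition~\ref{def:entropies} and repeating the computations from the proof of Proposition~\ref{gamma-entropies} with the elementary moment integral $\int_0^\infty x^k e^{-\lambda x}\,dx=k!\,\lambda^{-k-1}$ in place of the general $\Gamma$-integrals; this reproduces exactly the same expressions.
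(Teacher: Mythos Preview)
Your proposal is correct and follows exactly the paper's approach: the paper simply states that the exponential distribution is the gamma distribution with $\mu=1$ and records the proposition as a corollary of Proposition~\ref{gamma-entropies} without spelling out any of the substitutions. Your write-up just makes those (correct) specializations explicit, including the useful remark that the well-definedness constraints on $\alpha,\beta$ become vacuous at $\mu=1$.
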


\subsection{Chi-squared distribution: entropies and the Kullback--Leibler divergence}\label{chisq}

The chi-squared distribution with $\nu$ degrees of freedom represents the distribution of the sum of the squares of $\nu$ independent standard normal random variables. It is well known that this distribution is a special case of the gamma distribution, where the parameters are $\lambda = \frac12$ and $\mu = \frac{\nu}{2}$. Consequently, the following result arises as a direct application of Proposition~\ref{gamma-entropies}.

\begin{proposition}
\begin{enumerate}[{1)}]
    \item The Shannon entropy of the chi-squared distribution is given by
    \begin{equation*}
        \entropy_{SH}^{(\chi^2)} (\nu) = \log 2 + \log \Gamma(\tfrac{\nu}{2})+\tfrac{\nu}{2}
        -\psi(\tfrac{\nu}{2})(\tfrac{\nu}{2}-1).
    \end{equation*}
    
\item The R\'enyi entropy with parameter $\alpha>0$, $\alpha \ne 1$, of the chi-squared distribution is well-defined for  any $\alpha>0$ if $\nu\ge 2$ and for $0<\alpha<2$
if $\nu = 1$, or, in other words, for $\alpha(\frac{\nu}{2}-1)>-1$. 
For these parameter values, the R\'enyi entropy is given by
    \begin{gather*}
        \entropy_{R}^{(\chi^2)} (\alpha,\nu)
        =\log 2 -\frac{1}{1-\alpha}
        \log\frac{\Gamma^\alpha(\frac{\nu}{2})}{\Gamma(\alpha(\frac{\nu}{2}-1)+1)}
        -\frac{1-\alpha+\frac{\alpha\nu}{2}}{1-\alpha}\log\alpha.
    \end{gather*}

\item The generalized R\'enyi entropy with parameter $\alpha$ of the chi-squared distribution is well-defined under the same conditions on $\alpha$ and $\nu$ as for the R\'enyi entropy.
It is expressed as
    \[
        \entropy_{GR}^{(\chi^2)} (\alpha,\nu) =\log2 +\log\Gamma(\tfrac{\nu}{2})+(\tfrac{\nu}{2}-1)\log\alpha
        - (\tfrac{\nu}{2}-1)\psi(\alpha(\tfrac{\nu}{2}-1)+1)+\tfrac{\nu}{2}-1+\tfrac{1}{\alpha}.
    \]

\item
The Tsallis entropy with parameter $\alpha>0$, $\alpha\neq 1$ of the chi-squared distribution is given by
    \begin{equation*}
    \entropy_{T}^{(\chi^2)}(\alpha,\nu)
    =\frac{1}{1-\alpha}\left(2^{1-\alpha}\,
    \frac{\alpha^{\alpha(1-\frac{\nu}{2})-1}\Gamma(\alpha(\frac{\nu}{2}-1)+1)}{\Gamma^\alpha(\frac{\nu}{2})}-1\right).
    \end{equation*}

\item
The generalized R\'{e}nyi entropy with parameters $\alpha,\beta>0, \alpha\neq \beta$ of the chi-squared distribution is well-defined for $\min{(\alpha(\frac{\nu}{2}-1),\beta(\frac{\nu}{2}-1))}>-1$ and is given by 
\begin{gather*}
\entropy_{GR}^{(\chi^2)}(\alpha,\beta,\nu)
=\log2+\frac{1}{\beta-\alpha}\log\left(
\frac{\beta^{\beta(\frac{\nu}{2}-1)+1}}{\alpha^{\alpha(\frac{\nu}{2}-1)+1}}
\frac{\Gamma(\alpha(\frac{\nu}{2}-1)+1)}{\Gamma(\beta(\frac{\nu}{2}-1)+1)}\right)
+\log\Gamma\left (\frac{\nu}{2}\right ).
\end{gather*}

\item
The Sharma--Mittal entropy with parameters $\alpha,\beta>0, \alpha\neq \beta, \alpha\neq 1, \beta\neq1$ of the chi-squared distribution is well-defined for
$\alpha(\frac{\nu}{2}-1)>-1$  and is given by
\begin{equation*}
\entropy_{SM}^{(\chi^2)}(\alpha,\beta,\nu)
= \frac{1}{1-\beta}\left(
\frac{2^{1-\beta}\alpha^{\frac{(\alpha(1-\frac{\nu}{2})-1)(1-\beta)}{1-\alpha}}}{\Gamma^{\frac{\alpha(1-\beta)}{1-\alpha}}(\frac{\nu}{2})}
\Gamma^{\frac{1-\beta}{1-\alpha}}(\alpha(\tfrac{\nu}{2}-1)+1)
-1\right).
\end{equation*}

\item
The  Kullback--Leibler divergence (relative entropy) of two chi-squared distributions is given by
\begin{equation*}
\entropy_{KL}^{(\chi^2)} (\nu,\nu_1)
=\log\frac{\Gamma(\frac{\nu_1}{2})}{\Gamma(\frac{\nu}{2})}+\frac{\nu-\nu_1}{2}\psi\left(\frac{\nu}{2}\right).
\end{equation*}
\end{enumerate}
    
\end{proposition}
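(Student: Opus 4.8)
The plan is to obtain every formula as a direct specialization of Proposition~\ref{gamma-entropies}, exploiting the fact that the chi-squared distribution with $\nu$ degrees of freedom is precisely the gamma distribution with rate $\lambda=\tfrac12$ and shape $\mu=\tfrac{\nu}{2}$. Accordingly, for each of the seven quantities I would substitute $\lambda=\tfrac12$ and $\mu=\tfrac{\nu}{2}$ into the corresponding gamma formula and simplify the resulting constants.

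For the Shannon entropy I would use that $-\log\lambda=-\log\tfrac12=\log 2$, so replacing $\mu$ by $\tfrac{\nu}{2}$ in \eqref{eq:gamsh} yields the claimed expression $\log2+\log\Gamma(\tfrac{\nu}{2})+\tfrac{\nu}{2}-\psi(\tfrac{\nu}{2})(\tfrac{\nu}{2}-1)$. The same mechanical substitution in \eqref{eq:gamr}, \eqref{eq:gamgr}, and in the Tsallis, two-parameter generalized R\'enyi, and Sharma--Mittal formulas of Proposition~\ref{gamma-entropies} produces the remaining entropy formulas: in each of them the factor involving $\lambda$ (namely $-\log\lambda$, or $\lambda^{\alpha-1}$, $\lambda^{\beta-1}$, $\lambda^{\alpha-\beta}$) contributes a power of $2$, while the arguments of the gamma and digamma functions pick up the replacement $\mu\mapsto\tfrac{\nu}{2}$. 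The well-definedness conditions carry over verbatim: $\alpha(\mu-1)>-1$ becomes $\alpha(\tfrac{\nu}{2}-1)>-1$, which holds for all $\alpha>0$ when $\nu\ge2$ and reduces to $0<\alpha<2$ when $\nu=1$; similarly $\min(\alpha(\mu-1),\beta(\mu-1))>-1$ becomes $\min(\alpha(\tfrac{\nu}{2}-1),\beta(\tfrac{\nu}{2}-1))>-1$.

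For the Kullback--Leibler divergence between two chi-squared distributions one applies the gamma KL formula with the common rate $\lambda=\lambda_1=\tfrac12$ and shapes $\mu=\tfrac{\nu}{2}$, $\mu_1=\tfrac{\nu_1}{2}$. Since $\lambda=\lambda_1$, the terms $\mu_1\log\tfrac{\lambda}{\lambda_1}$ and $\mu\bigl(\tfrac{\lambda_1}{\lambda}-1\bigr)$ vanish, leaving $\log\tfrac{\Gamma(\mu_1)}{\Gamma(\mu)}+(\mu-\mu_1)\psi(\mu)=\log\tfrac{\Gamma(\nu_1/2)}{\Gamma(\nu/2)}+\tfrac{\nu-\nu_1}{2}\psi(\tfrac{\nu}{2})$, as asserted.

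There is essentially no conceptual obstacle here, since the statement is a corollary obtained by a change of parameters; the only places requiring mild care are the bookkeeping of the additive and multiplicative constants generated by $\lambda=\tfrac12$ and the faithful transcription of the domains of definition. As an optional sanity check one can compare the Shannon formula against the classical closed form for the differential entropy of the chi-squared law.
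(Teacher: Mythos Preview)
Your approach is correct and matches the paper's exactly: the proposition is stated there as a direct application of Proposition~\ref{gamma-entropies} with $\lambda=\tfrac12$ and $\mu=\tfrac{\nu}{2}$, without any additional argument. Your handling of the constants, the well-definedness conditions, and the KL case (where $\lambda=\lambda_1$ kills the rate-dependent terms) is accurate.
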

 \begin{remark} The non-negativity of the Kullback--Leibler divergence follows the inequalities:
\begin{align*}
\psi(\tfrac{\nu}{2}) & \ge \frac{\log \Gamma(\frac{\nu}{2})-\log \Gamma(\frac{\nu_1}{2})}{\frac{\nu}{2}-\frac{\nu_1}{2}}
\quad \text{for } \nu > \nu_1
\shortintertext{and}
\psi(\tfrac{\nu}{2}) & \le \frac{\log \Gamma(\frac{\nu_1}{2})-\log \Gamma(\frac{\nu}{2})}{\frac{\nu_1}{2}-\frac{\nu}{2}}
\quad \text{for } \nu < \nu_1,
\end{align*}
both of which are valid since the digamma function $\psi(x) = (\log \Gamma(x))'$ is strictly increasing. These inequalities follow directly from the application of the mean value theorem.    
 \end{remark}

\subsection{Laplace distribution: entropies and the Kullback--Leibler divergence}\label{Lapl}
\begin{definition}
A random variable has Laplace distribution with parameters $\mu\in\R$ and $\lambda>0$,
if its density has a form 
    \begin{gather*}
    p_{\mu,\lambda}(x)=\frac{\lambda}{2} e^{-\lambda |x-\mu|},\,x\in\R.
    \end{gather*}    
\end{definition}

\begin{proposition}
\begin{enumerate}[1)]
\item
The Shannon entropy of the Laplace distribution equals
\begin{equation*}
\entropy_{SH}^{(La)} (\lambda)
=-\log\frac{\lambda}{2}+1.
\end{equation*}

\item
The R\'{e}nyi entropy with parameter $\alpha>0$ of the Laplace distribution equals
\begin{equation*}
\entropy_{R}^{(La)}(\alpha,\lambda)
=-\log\frac{\lambda}{2}+\frac{\log\alpha}{\alpha-1}.
\end{equation*}

\item
The generalized R\'{e}nyi entropy with parameter $\alpha>0$ of the Laplace distribution equals
\begin{equation*}
\entropy_{GR}^{(La)}(\alpha,\lambda)
=-\log\frac{\lambda}{2}+\frac{1}{\alpha}.
\end{equation*}

\item
The Tsallis entropy with parameter $\alpha>0, \alpha\neq 1$ of the Laplace distribution equals
\begin{equation*}
\entropy_{T}^{(La)}(\alpha,\lambda)
=\frac{1}{\alpha(1-\alpha)}\left(\left(\frac{\lambda}{2}\right)^{\alpha-1}-\alpha\right).
\end{equation*}

\item
The generalized R\'{e}nyi entropy with parameters $\alpha,\beta>0, \alpha\neq \beta$ of the Laplace distribution equals
\begin{equation*}
\entropy_{GR}^{(La)}(\alpha,\beta,\lambda)
=-\log\frac{\lambda}{2}+\frac{1}{\beta-\alpha}\log\frac{\beta}{\alpha}.
\end{equation*}

\item
The Sharma--Mittal entropy with parameters $\alpha,\beta>0, \alpha\neq \beta, \alpha\neq 1, \beta\neq1$ of the Laplace distribution equals
\begin{equation*}
\entropy_{SM}^{(La)}(\alpha,\beta,\lambda)
=\frac{1}{1-\beta}\left(\lambda^{\beta-1}\alpha^{\frac{1-\beta}{\alpha-1}}-1\right).\label{eq:esmexp}
\end{equation*}

\item
The  Kullback--Leibler divergence (relative entropy) of two Laplace distributions equals
\begin{equation*}
\entropy_{KL}^{(La)} (\lambda,\lambda_1,\mu,\mu_1)=\log\frac{\lambda}{\lambda_1}
+ \frac{\lambda_1}{\lambda}\left(\lambda|\mu-\mu_1|+e^{-\lambda|\mu-\mu_1|}
\right) - 1,
\end{equation*}
and $\entropy_{KL}^{(La)} (\lambda,\lambda,\mu,\mu)=0$.\end{enumerate}
\end{proposition}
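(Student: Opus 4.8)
The plan is to exploit the fact that every entropy in the list, apart from the Kullback--Leibler divergence, is a location-invariant functional of the density, so that one may take $\mu=0$ without loss of generality and obtain all of them from the single ``master'' integral
\[
J(\alpha):=\int_{\R}p_{\mu,\lambda}^{\alpha}(x)\,dx
=\Bigl(\tfrac{\lambda}{2}\Bigr)^{\alpha}\int_{\R}e^{-\alpha\lambda|y|}\,dy
=\Bigl(\tfrac{\lambda}{2}\Bigr)^{\alpha}\cdot\frac{2}{\alpha\lambda}
=\frac{1}{\alpha}\Bigl(\tfrac{\lambda}{2}\Bigr)^{\alpha-1},\qquad\alpha>0,
\]
which I would get via the substitution $y=x-\mu$ together with the split $\int_{\R}=\int_{-\infty}^{0}+\int_{0}^{\infty}$. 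Because $J(\alpha)$ is finite for every $\alpha>0$ --- the density being bounded with exponential tails --- no restriction on the parameters is needed, in contrast with the gamma case. Items~2), 4), 5) and 6) then follow at once by substituting $J(\alpha)$, and where relevant $J(\beta)$, into the corresponding formulas of Definition~\ref{def:entropies} and simplifying the resulting powers and logarithms.

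For item~3) I would additionally compute $\int_{\R}p^{\alpha}\log p\,dx$: writing $\log p_{\mu,\lambda}(x)=\log\frac{\lambda}{2}-\lambda|x-\mu|$ turns this into $J(\alpha)\log\frac{\lambda}{2}-\lambda\bigl(\tfrac{\lambda}{2}\bigr)^{\alpha}\int_{\R}|y|e^{-\alpha\lambda|y|}\,dy=J(\alpha)\bigl(\log\tfrac{\lambda}{2}-\tfrac{1}{\alpha}\bigr)$, using $\int_{\R}|y|e^{-\alpha\lambda|y|}\,dy=\frac{2}{(\alpha\lambda)^{2}}$; hence $\entropy_{GR}^{(La)}(\alpha,\lambda)=-J(\alpha)^{-1}\int_{\R}p^{\alpha}\log p\,dx=-\log\frac{\lambda}{2}+\frac{1}{\alpha}$. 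For the Shannon entropy in item~1) I would either pass to the limit $\alpha\to1$ in the R\'enyi formula of item~2), using $\frac{\log\alpha}{\alpha-1}\to1$, or compute directly that $-\int_{\R}p\log p\,dx=-\log\frac{\lambda}{2}+\lambda\,\ex|X-\mu|$, where $\ex|X-\mu|=\int_{\R}|y|\frac{\lambda}{2}e^{-\lambda|y|}\,dy=\frac{1}{\lambda}$.

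The only genuinely new calculation is the Kullback--Leibler divergence of item~7). Starting from $\log\frac{p_{\mu,\lambda}(x)}{p_{\mu_1,\lambda_1}(x)}=\log\frac{\lambda}{\lambda_1}-\lambda|x-\mu|+\lambda_1|x-\mu_1|$ and integrating against $p_{\mu,\lambda}$, one obtains $\entropy_{KL}^{(La)}(\lambda,\lambda_1,\mu,\mu_1)=\log\frac{\lambda}{\lambda_1}-1+\lambda_1\,\ex|X-\mu_1|$ for $X\sim\mathrm{Laplace}(\mu,\lambda)$, since $\lambda\,\ex|X-\mu|=1$. The main obstacle, modest as it is, will be evaluating the mean absolute deviation $\ex|X-\mu_1|$ of a Laplace variable about a point other than its centre: by translation I would reduce to $a:=\mu_1-\mu\ge0$ and split $\int_{\R}|y-a|\frac{\lambda}{2}e^{-\lambda|y|}\,dy$ over $(-\infty,0)$, $(0,a)$ and $(a,\infty)$, where elementary integration of $y^{k}e^{-\lambda y}$ on these pieces yields $\ex|X-\mu_1|=|\mu-\mu_1|+\frac{1}{\lambda}e^{-\lambda|\mu-\mu_1|}$; substituting this produces the stated formula. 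Finally, setting $\mu=\mu_1$ and $\lambda=\lambda_1$ gives $\log1-1+(0+1)=0$, which confirms the last assertion, while the general non-negativity of $\entropy_{KL}^{(La)}$ is already guaranteed by Remark~\ref{remkulb}.
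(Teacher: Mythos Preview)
Your proposal is correct and follows essentially the same approach as the paper: both compute the master integral $J(\alpha)=\int p^{\alpha}\,dx=\alpha^{-1}(\lambda/2)^{\alpha-1}$, deduce items 2), 4)--6) from it, handle item 3) via $\int p^{\alpha}\log p\,dx$ using $\log p=\log\tfrac{\lambda}{2}-\lambda|x-\mu|$, and for item 7) split the integral for $\ex|X-\mu_1|$ into three regions determined by the breakpoints of $|y|$ and $|y-a|$. The only cosmetic differences are that the paper phrases the Shannon computation through the observation that $\lambda|X-\mu|$ is exponential with parameter~$1$, and it carries out the three-piece split in the original $x$-variable rather than after the translation $y=x-\mu$; your explicit remark that items 1)--6) are location-invariant and your alternative of obtaining the Shannon value as the $\alpha\to1$ limit of the R\'enyi formula are natural touches that the paper does not spell out.
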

\begin{remark} Here the situation with the  Kullback--Leibler divergence is the following: the whole inequality 
 $$\log\frac{\lambda}{\lambda_1}
+ \frac{\lambda_1}{\lambda}\left(\lambda|\mu-\mu_1|+e^{-\lambda|\mu-\mu_1|}
\right) - 1 \ge 0$$ more likely was not considered before   for any strictly positive $\lambda,  \lambda_1, \mu, \mu_1$, while, it is very easy to get it if we just note that $$e^{-\lambda|\mu-\mu_1|}\ge 1- \lambda|\mu-\mu_1|,$$    and $$\log\frac{\lambda}{\lambda_1}+\frac{\lambda_1}{\lambda}-1\ge 0.$$  
\end{remark}
\begin{proof}
1) Indeed, the Shannon entropy $\entropy_{SH}^{((La))} (\lambda)$ can be evaluated as follows
\begin{gather*}
   \entropy_{SH}^{(La)} (\lambda)
   =-\int_{-\infty}^{+\infty}\frac{\lambda}{2}e^{-\lambda|x-\mu|}
   \log\left(\frac{\lambda}{2}e^{-\lambda|x-\mu|} \right)dx\\
   =\log \frac{\lambda}{2}
   +\int_{-\infty}^{+\infty}\frac{\lambda}{2}e^{-\lambda|x-\mu|}
   \lambda|x-\mu|dx
   =-\log\frac{\lambda}{2}+1.
\end{gather*}
Here we have used the fact that if $X$ has the Laplace distribution with parameters $\mu$ and $\lambda$, then the variable $Y=\lambda|X-\mu|$ has the exponential distribution with parameter~1, and the last integral is the mathematical expectation of $Y$.

2) In the following we will use the value of the integral
$J^{(La)}=\int_{-\infty}^{+\infty}p^\alpha_{\mu,\lambda}(x) dx$, so let we calculate it. We have that
\begin{gather}
    J^{(La)}=\frac{\lambda^{\alpha-1}}{2^{\alpha-1} \alpha}
    \int_{-\infty}^{+\infty}\frac{\lambda\alpha}{2}e^{-\lambda\alpha|x-\mu|}  dx
    =\frac{\lambda^{\alpha-1}}{2^{\alpha-1} \alpha}\label{eq:intJLa}.
\end{gather}

Thus, by \eqref{eq:intJLa} the R\'{e}nyi entropy with parameters $\alpha,\lambda$ of the Laplace distribution equals
\begin{gather*}
\entropy_{R}^{(La)}(\alpha,\lambda)
=\frac{1}{1-\alpha}\log\left(J^{(La)}\right)
=\frac{1}{1-\alpha} \log\left(\frac{\lambda^{\alpha-1}}{2^{\alpha-1}\alpha}\right)=-\log\frac{\lambda}{2}+\frac{\log\alpha}{\alpha-1}.
\end{gather*}

3) Let us calculate the integral
    $J_1^{(La)}=\int_0^{\infty}p^\alpha_{\mu,\lambda}(x)\log p_{\mu,\lambda}(x) dx$. Obtain,
    \begin{align*}
        J_1^{(La)}&=J^{(La)}\log\frac{\lambda}{2}-
        \frac{\lambda^{\alpha-1}}{2^{\alpha-1} \alpha^2}
    \int_{-\infty}^{+\infty}\frac{\lambda\alpha}{2}e^{-\lambda\alpha|x-\mu|}\lambda\alpha|x-\mu|  dx \\
        &=\frac{\lambda^{\alpha-1}}{2^{\alpha-1} \alpha}
        \left(\log\frac{\lambda}{2}-\frac{1}{\alpha}\right),
    \end{align*}
and the proof follows by \eqref{eq:GR2dens} and \eqref{eq:intJLa}.

The statements 4)--6) immediately follow from \eqref{eq:intJLa}.

7) We can proceed with the following relations
    \begin{align*}
    \entropy_{KL}^{(La)} (\lambda,\lambda_1,\mu,\mu_1)
    &=\int_{-\infty}^{+\infty}\frac{\lambda}{2}e^{-\lambda|x-\mu|}
   \log\left(\frac{\lambda}{\lambda_1}e^{-\lambda|x-\mu|+\lambda_1|x-\mu_1|} \right)dx    \\
    &=\log\frac{\lambda}{\lambda_1}-1
    +\int_{-\infty}^{+\infty}\frac{\lambda}{2}e^{-\lambda|x-\mu|}
   \lambda_1|x-\mu_1| dx.    
    \end{align*}

Consider the case $\mu\le\mu_1$. We can decompose the last integral as follows:
\begin{align*}
    J_2^{(La)}=\int_{-\infty}^{+\infty}\frac{\lambda}{2}e^{-\lambda|x-\mu|}
   \lambda_1|x-\mu_1| dx= - M_1^{(La)}-M_2^{(La)}+ M_3^{(La)},
\end{align*}
where
\begin{align*}
    M_1^{(La)}&=\int_{-\infty}^{\mu}\frac{\lambda}{2}e^{\lambda(x-\mu)}
   \lambda_1(x-\mu_1) dx
   =\frac{\lambda_1}{2\lambda}e^{-\lambda(\mu-\mu_1)}\int_{-\infty}^{\lambda(\mu-\mu_1)} ye^{y}dy\\
   &=\frac{\lambda_1}{2\lambda}\left(\lambda(\mu-\mu_1)-1\right),
\end{align*}
\begin{align*}
   M_2^{(La)}&=\int_{\mu}^{\mu_1}\frac{\lambda}{2}e^{-\lambda(x-\mu)}
   \lambda_1(x-\mu_1) dx
   =\frac{\lambda_1}{2\lambda}e^{-\lambda(\mu_1-\mu)}
   \int_{\lambda(\mu-\mu_1)}^{0} ye^{-y}dy\\
   &=\frac{\lambda_1}{2\lambda}\left(-e^{-\lambda(\mu_1-\mu)}+\lambda(\mu-\mu_1)+1\right),
   \\
   M_3^{(La)}&=\int_{\mu_1}^{+\infty}\frac{\lambda}{2}e^{-\lambda(x-\mu)}
   \lambda_1(x-\mu_1) dx
   =\frac{\lambda_1}{2\lambda}e^{-\lambda(\mu_1-\mu)}
   \int_{0}^{+\infty} ye^{-y}dy\\
   &=\frac{\lambda_1}{2\lambda} e^{-\lambda(\mu_1-\mu)},
\end{align*}
and we get
\begin{equation*}
    J_2^{(La)}=\frac{\lambda_1}{\lambda}\left(\lambda(\mu_1-\mu)+e^{-\lambda(\mu_1-\mu)} \right).
\end{equation*}

In the case $\mu_1<\mu$ we come to the similar  result, i.e.,
\begin{equation*}
    J_2^{(La)}=\frac{\lambda_1}{\lambda}\left(\lambda(\mu-\mu_1)+e^{-\lambda(\mu-\mu_1)} \right),
\end{equation*}
whence the proof follows.
\end{proof}

\subsection{Log-normal distribution: entropies and the Kullback--Leibler divergence}

\begin{definition}
A random variable $X_{m,\sigma^2}$ has lognormal distribution with parameters $m\in\R$ and $\sigma^2>0$, if $\log X_{m,\sigma^2}$ has a normal distribution $\mathcal{N}(m,\sigma^2)$.
\end{definition}

It is well known that the probability density function of $X_{m,\sigma^2}$ is given by
    \[
    p_{m,\sigma^2}(x) = \frac{1}{x \sigma \sqrt{2\pi}} \exp\left\{- \frac{(\log x - m)^2}{2\sigma^2}\right\} \1_{(0,+\infty)}(x).  
    \]     

To compute the associated entropies, we begin with the following auxiliary lemma, which provides the expectations of certain functions of the lognormal random variable $X_{m,\sigma^2}$.
    
\begin{lemma}
For all $p \in \R$
\begin{gather}
\label{exln1}
\ex X_{m,\sigma^2}^p = \exp\left\{mp + \frac{\sigma^2 p^2 }{2}\right\},
\\
\label{exln2}
\ex \left[ X_{m,\sigma^2}^p \log X_{m,\sigma^2}\right]
= \left(\sigma^2p + m\right)\exp\left\{mp + \frac{\sigma^2 p^2 }{2}\right\},
\\
\label{exln3}
\ex \left[ X_{m,\sigma^2}^p (\log X_{m,\sigma^2} - m)^2\right]
= \sigma^2 \left(\sigma^2 p^2 +1\right)\exp\left\{mp + \frac{\sigma^2 p^2 }{2}\right\},
\end{gather}
\end{lemma}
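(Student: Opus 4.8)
The plan is to compute all three expectations by reducing them to Gaussian integrals via the substitution $Y = \log X_{m,\sigma^2}$, which by definition of the lognormal distribution is $\mathcal N(m,\sigma^2)$. Then $X_{m,\sigma^2}^p = e^{pY}$ and $\log X_{m,\sigma^2} = Y$, so that \eqref{exln1} becomes the moment generating function $\ex e^{pY} = \exp\{mp + \sigma^2 p^2/2\}$ of a Gaussian, which is standard. The only mild care needed is to note that the integral converges for every real $p$ since the Gaussian density decays faster than any exponential grows; this justifies writing the identity for all $p \in \R$ rather than only $p$ in some strip.

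For \eqref{exln2} I would differentiate \eqref{exln1} with respect to $p$. Indeed $\frac{\partial}{\partial p}\ex e^{pY} = \ex[Y e^{pY}]$ by dominated convergence (again using Gaussian tails to dominate $|Y| e^{pY}$ uniformly for $p$ in a neighbourhood), and differentiating the right-hand side $\exp\{mp+\sigma^2 p^2/2\}$ gives $(\sigma^2 p + m)\exp\{mp+\sigma^2 p^2/2\}$. Recalling $Y = \log X_{m,\sigma^2}$, this is exactly \eqref{exln2}. Alternatively one can complete the square directly: $e^{pY}$ times the Gaussian density is, up to the constant $\exp\{mp+\sigma^2p^2/2\}$, the density of $\mathcal N(m+\sigma^2 p, \sigma^2)$, so $\ex[Y e^{pY}] = \exp\{mp+\sigma^2p^2/2\}\cdot \ex[\widetilde Y]$ where $\widetilde Y \sim \mathcal N(m+\sigma^2 p,\sigma^2)$, whose mean is $m+\sigma^2 p$. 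I would present whichever is shorter; the change-of-measure version also makes \eqref{exln3} transparent.

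For \eqref{exln3}, using the same change of measure, $\ex[(Y-m)^2 e^{pY}] = \exp\{mp+\sigma^2 p^2/2\}\cdot \ex[(\widetilde Y - m)^2]$ with $\widetilde Y \sim \mathcal N(m+\sigma^2 p,\sigma^2)$. Writing $\widetilde Y - m = (\widetilde Y - (m+\sigma^2 p)) + \sigma^2 p$ and expanding, the cross term vanishes and one gets $\Var(\widetilde Y) + (\sigma^2 p)^2 = \sigma^2 + \sigma^4 p^2 = \sigma^2(\sigma^2 p^2 + 1)$, which is the claimed factor. Equivalently, \eqref{exln3} follows by differentiating \eqref{exln2} once more in $p$ and simplifying, or from the second derivative of \eqref{exln1} combined with \eqref{exln1} and \eqref{exln2}.

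There is no real obstacle here; the statement is a routine Gaussian computation. The only point deserving a sentence of justification is the interchange of expectation and differentiation (or, equivalently, the validity of the integral identities for all real $p$), which is guaranteed by the super-exponential decay of the normal density. I would therefore keep the proof short: establish \eqref{exln1} by completing the square, then obtain \eqref{exln2} and \eqref{exln3} either by differentiating in $p$ or by the $\mathcal N(m+\sigma^2 p,\sigma^2)$ change-of-measure trick, noting in one line why all manipulations are legitimate.
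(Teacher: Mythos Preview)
Your proposal is correct and follows essentially the same approach as the paper: both reduce to the moment generating function of the underlying Gaussian $Y=\log X_{m,\sigma^2}$ and obtain \eqref{exln2}--\eqref{exln3} by differentiating in $p$ (the paper centers first, writing $Z=Y-m$ with $\varphi(t)=\ex e^{tZ}=\exp\{\sigma^2t^2/2\}$ and then reading off $e^{mp}\varphi(p)$, $e^{mp}(\varphi'(p)+m\varphi(p))$, $e^{mp}\varphi''(p)$, but this is cosmetic). Your optional change-of-measure variant---tilting to $\mathcal N(m+\sigma^2p,\sigma^2)$---is a nice equivalent route the paper does not mention, and your remark on justifying the interchange of expectation and differentiation is a small refinement over the paper, which leaves that implicit.
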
 

\begin{proof}
Let $Z = \log X_{m,\sigma^2} - m \simeq \mathcal N(0,\sigma^2)$.
The moment generating function of $Z$ is given by
\[
\varphi(t) = \ex \exp\{tZ\} = \exp\left\{\frac{\sigma^2 t^2}{2}\right\},
\quad t\in\R.
\]
The expectations \eqref{exln1}--\eqref{exln3} can be expressed in terms of $\varphi$ and its derivatives:
\begin{gather*}
\ex X_{m,\sigma^2}^{p} = \ex e^{p\log X_{m,\sigma^2}} 
= \ex e^{p(Z+m)} = e^{mp}\varphi(p),
\\
\ex \left[ X_{m,\sigma^2}^p \log X_{m,\sigma^2}\right]
= \ex \left[e^{p(Z+m)} (Z + m)\right]
= e^{mp} \bigl(\varphi'(p) + m\varphi(p)\bigr),
\\
\ex \left[ X_{m,\sigma^2}^p (\log X_{m,\sigma^2} - m)^2\right]
= \ex \left[e^{p(Z+m)} Z^2\right]
= e^{mp} \varphi''(p).
\end{gather*}
Substituting 
$\varphi(p) = \exp \{\frac{\sigma^2 p^2}{2}\}$,
$\varphi'(p) = \sigma^2 p \exp \{\frac{\sigma^2 p^2}{2}\} $, and
$\varphi''(p)= \sigma^2(1 + \sigma^2 p^2)\exp \{\frac{\sigma^2 p^2}{2}\} $ 
into the above expressions yields \eqref{exln1}--\eqref{exln3}.
\end{proof}

\begin{proposition}\label{ln-entropies}
\begin{enumerate}[{1)}]
    \item The Shannon entropy of the lognormal distribution is given by the formula 
    \[
        \entropy_{SH}^{LN}\left(m,\sigma^2\right) =  \log\sigma + \frac12\log(2\pi) + m + \frac12.
    \]
    
\item The R\'enyi entropy with parameter $\alpha>0$, $\alpha \ne 1$, of the lognormal distribution is given by
\[
\entropy_{R}^{LN} (\alpha,m,\sigma^2) 
= \log\sigma + \frac12\log(2\pi) + m + \frac{\log\alpha}{2(\alpha-1)}
+ \frac{\sigma^2 (1-\alpha)}{2\alpha}.
\]

\item The generalized R\'enyi entropy with parameter $\alpha$ of the lognormal distribution is given by
\[
        \entropy_{GR}^{LN} (\alpha,m,\sigma^2)= \log\sigma + \frac12\log(2\pi) + m + \frac{1}{2\alpha} + \frac{\sigma^2\left(1-\alpha^2\right)}{2\alpha^2}.
\]

\item
The Tsallis entropy with parameter $\alpha>0$, $\alpha\neq 1$ of the lognormal distribution is given by the formula 
\[
\entropy_T^{LN} (\alpha,m,\sigma^2)
= \frac{1}{1-\alpha} \left(\sigma^{1-\alpha}(2\pi)^{\frac{1-\alpha}{2}} \alpha^{-\frac12}
\exp\left\{m(1-\alpha) + \frac{\sigma^2 (1-\alpha)^2 }{2\alpha}\right\} - 1 \right)
\]

\item
The generalized R\'{e}nyi entropy with parameters $\alpha,\beta>0, \alpha\neq \beta$ of the lognormal distribution is given by
\[
\entropy_{GR}^{LN}\left(\alpha,\beta,m,\sigma^2\right)
= \log\sigma + \frac12 \log(2\pi)  + m + \frac{\log\beta - \log\alpha}{2(\beta-\alpha)} + \frac{\sigma^2 (1 - \alpha\beta) }{2\alpha\beta}.
\]

\item
The Sharma--Mittal entropy with parameters $\alpha,\beta>0, \alpha\neq \beta, \alpha\neq 1, \beta\neq1$ of the lognormal distribution is given by 
\begin{align*}
\entropy_{SM}^{LN}\left (\alpha,\beta,m,\sigma^2\right )
= \frac{1}{1-\beta} \left(\sigma^{1-\beta}(2\pi)^{\frac{1-\beta}{2}} \alpha^{-\frac{1- \beta}{2(1 - \alpha)}}
\exp\left\{m(1-\beta) + \frac{\sigma^2 (1-\alpha)(1-\beta) }{2\alpha}\right\} - 1 \right).
\end{align*}

\item
The  Kullback--Leibler divergence (relative entropy) of two lognormal distributions is given by 
\[
\entropy_{KL}^{LN} \left(m,m_1,\sigma^2,\sigma_1^2\right) 
= \log\left(\frac{\sigma_1}{\sigma}\right) + \frac{\sigma^2 -\sigma_1^2 + (m-m_1)^2}{2\sigma_1^2}.
\]
\end{enumerate}
    
\end{proposition}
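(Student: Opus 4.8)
The plan is to route all seven formulas through a single master quantity, the integral
\[
J(\alpha):=\int_0^{\infty} p_{m,\sigma^2}^{\alpha}(x)\,dx,\qquad \alpha>0,
\]
supplemented by the two elementary moments $\ex\log X_{m,\sigma^2}=m$ and $\ex(\log X_{m,\sigma^2}-m)^2=\sigma^2$, which are the $p=0$ instances of \eqref{exln2} and \eqref{exln3}. Note that, unlike the gamma case, no restriction on $\alpha$ is needed here: the substitution $y=\log x$ turns every relevant integral into a Gaussian one, convergent for all $\alpha>0$.

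First I would compute $J(\alpha)$. Either directly via $y=\log x$, or by observing that $p_{m,\sigma^2}^{\alpha}(x)$ equals the constant $\sigma^{1-\alpha}(2\pi)^{(1-\alpha)/2}\alpha^{-1/2}$ times $x^{1-\alpha}$ times the density of a lognormal law with parameters $m$ and $\sigma^2/\alpha$; applying \eqref{exln1} with exponent $1-\alpha$ and variance $\sigma^2/\alpha$ then gives
\[
J(\alpha)=\sigma^{1-\alpha}(2\pi)^{\frac{1-\alpha}{2}}\alpha^{-1/2}\exp\!\Bigl\{m(1-\alpha)+\frac{\sigma^2(1-\alpha)^2}{2\alpha}\Bigr\}.
\]
With $J$ in hand, parts 2), 4), 5), 6) are pure substitution: $\entropy_R^{LN}(\alpha)=\frac{1}{1-\alpha}\log J(\alpha)$, $\entropy_T^{LN}(\alpha)=\frac{1}{1-\alpha}(J(\alpha)-1)$, $\entropy_{GR}^{LN}(\alpha,\beta)=\frac{1}{\beta-\alpha}\log\frac{J(\alpha)}{J(\beta)}$, and $\entropy_{SM}^{LN}(\alpha,\beta)=\frac{1}{1-\beta}\bigl(J(\alpha)^{(1-\beta)/(1-\alpha)}-1\bigr)$. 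For the single-parameter generalized R\'enyi entropy, part 3), I would avoid recomputing an integral: since $\partial_\alpha p^\alpha=p^\alpha\log p$, one has $J'(\alpha)=\int p^\alpha\log p\,dx$ (differentiation under the integral justified by the Gaussian tails), so by \eqref{eq:GR2dens}
\[
\entropy_{GR}^{LN}(\alpha,m,\sigma^2)=-\frac{J'(\alpha)}{J(\alpha)}=-\frac{d}{d\alpha}\log J(\alpha),
\]
which reduces the claim to differentiating $\log J$.

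For part 1) I would write $\entropy_{SH}^{LN}=\ex\bigl[-\log p_{m,\sigma^2}(X_{m,\sigma^2})\bigr]$, expand $-\log p_{m,\sigma^2}(x)=\log x+\log\sigma+\tfrac12\log(2\pi)+\tfrac{(\log x-m)^2}{2\sigma^2}$, and take expectations term by term using the two moments above. For the Kullback--Leibler divergence, part 7), the integrand is $\log\frac{p_{m,\sigma^2}}{p_{m_1,\sigma_1^2}}(x)=\log\frac{\sigma_1}{\sigma}-\frac{(\log x-m)^2}{2\sigma^2}+\frac{(\log x-m_1)^2}{2\sigma_1^2}$, and taking expectation under $X_{m,\sigma^2}$ uses $\ex(\log X_{m,\sigma^2}-m)^2=\sigma^2$ together with $\ex(\log X_{m,\sigma^2}-m_1)^2=\sigma^2+(m-m_1)^2$, the latter by writing $\log X-m_1=(\log X-m)+(m-m_1)$.

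No step is a genuine obstacle; the only care needed is in the final algebraic simplifications. In particular, part 3) rests on $\frac{d}{d\alpha}\frac{(1-\alpha)^2}{\alpha}=\frac{\alpha^2-1}{\alpha^2}$, and part 5) on the identity $\frac{(1-\alpha)^2}{\alpha}-\frac{(1-\beta)^2}{\beta}=(\beta-\alpha)\frac{1-\alpha\beta}{\alpha\beta}$; together with routine rearrangements such as $\frac{1}{1-\alpha}\log\alpha=-\frac{\log\alpha}{\alpha-1}$, these yield all seven stated formulas.
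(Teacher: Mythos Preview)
Your proposal is correct and, for parts 1), 2), 4)--7), follows the same route as the paper: compute the master integral $J(\alpha)=\int p^\alpha$ via the identity $p_{m,\sigma^2}^\alpha(x)=\sigma^{1-\alpha}(2\pi)^{(1-\alpha)/2}\alpha^{-1/2}\,x^{1-\alpha}\,p_{m,\sigma^2/\alpha}(x)$ together with \eqref{exln1}, then substitute; for the Shannon entropy and the Kullback--Leibler divergence, expand the logarithm and take expectations using the first two moments of $\log X_{m,\sigma^2}$.

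The one genuine difference is part 3). The paper computes $\int p^\alpha\log p$ directly, splitting it into three pieces $R_1,R_2,R_3$ and evaluating each via \eqref{exln1}--\eqref{exln3} applied to the auxiliary lognormal with variance $\sigma^2/\alpha$. Your device $\entropy_{GR}^{LN}(\alpha)=-J'(\alpha)/J(\alpha)=-\tfrac{d}{d\alpha}\log J(\alpha)$ is cleaner: once $J(\alpha)$ is known in closed form, a single differentiation (resting on $\tfrac{d}{d\alpha}\tfrac{(1-\alpha)^2}{\alpha}=\tfrac{\alpha^2-1}{\alpha^2}$) replaces the three-integral bookkeeping. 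The paper's approach has the minor advantage of making explicit use of all three moment identities \eqref{exln1}--\eqref{exln3}, whereas yours uses only \eqref{exln1} for $J(\alpha)$ and then calculus; but your route is shorter and applies verbatim to any family for which $J(\alpha)$ has a closed form.
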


\begin{proof}
1) To calculate the Shannon entropy of $X_{m,\sigma^2}$, we express it in terms of the mean and variance of $\log X_{m,\sigma^2} \simeq \mathcal{N} (m,\sigma^2)$.
Substituting the expression for the lognormal density into the definition of the Shannon entropy, we get:
\begin{align*}
\entropy_{SH}^{LN}(m,\sigma^2)
&= -\int_0^\infty p_{m,\sigma^2}(x) \log \left( \frac{1}{x \sigma \sqrt{2\pi}} \exp\left\{- \frac{(\log x - m)^2}{2\sigma^2}\right\}\right) dx 
\\
&=  \log \left(\sigma \sqrt{2\pi}\right) \int_0^\infty p_{m,\sigma^2}(x)  dx 
+ \int_0^\infty p_{m,\sigma^2}(x) \log x  dx 
\\
&\quad + \int_0^\infty p_{m,\sigma^2}(x) \frac{(\log x - m)^2}{2\sigma^2} dx 
\\
&=  \log \left(\sigma \sqrt{2\pi}\right)
+ \ex \log X_{m,\sigma^2} + \frac{1}{2\sigma^2} \Var(\log X_{m,\sigma^2}) 
\\
&=  \log \left(\sigma \sqrt{2\pi}\right)
+ m + \frac{1}{2}.
\end{align*}

2) To compute the R\'enyi entropy $\entropy_{R}^{LN} (\alpha,m,\sigma^2)$, 
note that
\begin{align}
p_{m,\sigma^2}^\alpha(x) 
&= \frac{1}{\sigma^\alpha (2\pi)^{\alpha/2}}  \frac{1}{x^\alpha} \exp\left\{- \frac{\alpha(\log x - m)^2}{2\sigma^2}\right\}  
\notag\\
&=  \sigma^{1-\alpha}(2\pi)^{\frac{1-\alpha}{2}} \alpha^{-\frac12} x^{1-\alpha} p_{m,\sigma^2/\alpha}(x).
\label{lndensities}
\end{align}
Therefore,
\begin{align*}
\int_0^{\infty} p_{m,\sigma^2}^\alpha(x)dx 
&=  \sigma^{1-\alpha}(2\pi)^{\frac{1-\alpha}{2}} \alpha^{-\frac12} \int_0^{\infty} x^{1-\alpha} p_{m,\sigma^2/\alpha}(x) dx 
\\
&= \sigma^{1-\alpha}(2\pi)^{\frac{1-\alpha}{2}} \alpha^{-\frac12}\ex X_{m,\sigma^2/\alpha}^{1-\alpha},
\end{align*}
where $X_{m,\sigma^2/\alpha}$ denotes a lognormal random variable with parameters $m$ and $\sigma^2/\alpha$.
Using the moment formula \eqref{exln1} for a lognormal distribution, we have:
\begin{equation}\label{ln-int}
\int_0^{\infty} p_{m,\sigma^2}^\alpha(x)dx 
= \sigma^{1-\alpha}(2\pi)^{\frac{1-\alpha}{2}} \alpha^{-\frac12}
\exp\left\{m(1-\alpha) + \frac{\sigma^2 (1-\alpha)^2 }{2\alpha}\right\}.
\end{equation}
Hence, the R\'enyi entropy is given by
\begin{align*}
\entropy_{R}^{LN} (\alpha,m,\sigma^2) &=
\frac{1}{1-\alpha} \log\left(\int_0^{\infty} p_{m,\sigma^2}^\alpha(x)dx \right)
\\
& =\frac{1}{1-\alpha} \log\left(\sigma^{1-\alpha}(2\pi)^{\frac{1-\alpha}{2}} \alpha^{-\frac12}
\exp\left\{m(1-\alpha) + \frac{\sigma^2 (1-\alpha)^2 }{2\alpha}\right\}\right)
\\
& = \log\sigma + \frac12 \log(2\pi) + \frac{\log\alpha}{2(\alpha-1)}
+ m + \frac{\sigma^2 (1-\alpha)}{2\alpha}.
\end{align*}

3) Similarly to the proof of the statement 1), the integral from the definition of the generalized R\'enyi entropy can be decomposed as
\begin{align*}
\MoveEqLeft[1.5]
\int_0^\infty p_{m,\sigma^2}^\alpha(x) \log p_{m,\sigma^2}(x)dx
\\
&= - \log \left(\sigma \sqrt{2\pi}\right) \int_0^\infty p_{m,\sigma^2}^\alpha(x)  dx 
- \int_0^\infty p_{m,\sigma^2}^\alpha(x) \log x  dx 
- \frac{1}{2\sigma^2}\int_0^\infty p_{m,\sigma^2}^\alpha(x) (\log x - m)^2 dx
\\
&\eqqcolon R_1 + R_2 + R_3.
\end{align*}
Using \eqref{ln-int}, we immediately obtain
\[
R_1 = - \sigma^{1-\alpha}(2\pi)^{\frac{1-\alpha}{2}} \alpha^{-\frac12} \log \left(\sigma \sqrt{2\pi}\right) 
\exp\left\{m(1-\alpha) + \frac{\sigma^2 (1-\alpha)^2 }{2\alpha}\right\}.
\]
Further, using the representation \eqref{lndensities} and the identity \eqref{exln2}, we derive
\begin{align*}
R_2 &= - \sigma^{1-\alpha}(2\pi)^{\frac{1-\alpha}{2}} \alpha^{-\frac12} 
 \int_0^\infty x^{1-\alpha} (\log x) p_{m,\sigma^2/\alpha}(x)  dx
 \\
&= - \sigma^{1-\alpha}(2\pi)^{\frac{1-\alpha}{2}} \alpha^{-\frac12} 
 \ex\left[X_{m,\sigma^2/\alpha}^{1-\alpha} \log X_{m,\sigma^2/\alpha}\right]
  \\
&= - \sigma^{1-\alpha}(2\pi)^{\frac{1-\alpha}{2}} \alpha^{-\frac12} 
\left(\frac{\sigma^2(1-\alpha)}{\alpha} + m\right)\exp\left\{m(1-\alpha) + \frac{\sigma^2 (1-\alpha)^2 }{2\alpha}\right\}.
\end{align*}
Similarly, using \eqref{lndensities} and \eqref{exln3}, we get
\begin{align*}
R_3 &= - \sigma^{1-\alpha}(2\pi)^{\frac{1-\alpha}{2}} \alpha^{-\frac12} \frac{1}{2\sigma^2}\int_0^\infty  x^{1-\alpha}  (\log x - m)^2 p_{m,\sigma^2/\alpha}(x) dx
\\
&= - \sigma^{1-\alpha}(2\pi)^{\frac{1-\alpha}{2}} \alpha^{-\frac12} \frac{1}{2\sigma^2} \ex\left[X_{m,\sigma^2/\alpha}^{1-\alpha}  \left(\log X_{m,\sigma^2/\alpha} - m\right)^2\right]
\\
&= - \sigma^{1-\alpha}(2\pi)^{\frac{1-\alpha}{2}} \alpha^{-\frac12} \frac{1}{2\alpha} \left(\frac{\sigma^2 (1-\alpha)^2}{\alpha} +1\right)\exp\left\{m(1-\alpha) + \frac{\sigma^2 (1-\alpha)^2 }{2\alpha}\right\}.
\end{align*}
Thus, combining the above identities, we arrive at
\begin{multline}\label{ln-int2}
\int_0^\infty p_{m,\sigma^2}^\alpha(x) \log p_{m,\sigma^2}(x)dx
\\*
= - \sigma^{1-\alpha}(2\pi)^{\frac{1-\alpha}{2}} \alpha^{-\frac12} 
\left(\log \left(\sigma \sqrt{2\pi}\right) + \frac{\sigma^2\left(1-\alpha^2\right)}{2\alpha^2} + m + \frac{1}{2\alpha}\right)
\\*
{}\times\exp\left\{m(1-\alpha) + \frac{\sigma^2 (1-\alpha)^2 }{2\alpha}\right\}.
\end{multline}
Finally, inserting \eqref{ln-int2} and \eqref{ln-int} into the definition \eqref{eq:GR2dens} of the generalized R\'enyi entropy, we obtain
\begin{align*}
\entropy_{GR}^{LN} (\alpha,m,\sigma^2)
    &= - \frac{\int_0^\infty p_{m,\sigma^2}^\alpha(x) \log p_{m,\sigma^2}(x)dx}{\int_0^\infty p_{m,\sigma^2}^\alpha(x) dx}
\\*
&=\log \left(\sigma \sqrt{2\pi}\right) + \frac{\sigma^2\left(1-\alpha^2\right)}{2\alpha^2} + m + \frac{1}{2\alpha}.
\end{align*}

4) The formula for $\entropy_T^{LN} (\alpha,m,\sigma^2)$  follows by substituting the result \eqref{ln-int} into the definition \eqref{eq:Tdens} of the Tsallis entropy.

5) Substituting \eqref{ln-int} into \eqref{eq:GR1dens} and simplifying the resulting expressions, we obtain
\begin{align*}
\MoveEqLeft[0]
\entropy_{GR}^{LN}\left(\alpha, \beta, m, \sigma^2\right) 
= \frac{1}{\beta - \alpha}  \log\left(\frac{\int_0^\infty p_{m,\sigma^2}^\alpha(x) dx}{\int_0^\infty p_{m,\sigma^2}^\beta(x) dx}\right)
\\
&= \frac{1}{\beta - \alpha}  \log\left(\sigma^{\beta-\alpha}(2\pi)^{\frac{\beta-\alpha}{2}} \left(\frac{\beta}{\alpha}\right)^{\frac12}
\exp\left\{m(\beta-\alpha) + \frac{\sigma^2 (\beta-\alpha) (1 - \alpha\beta) }{2\alpha\beta}\right\}
\right)
\\
&= \log\sigma + \frac12 \log(2\pi) + \frac{\log\beta - \log\alpha}{2(\beta-\alpha)} + m + \frac{\sigma^2 (1 - \alpha\beta) }{2\alpha\beta}.
\end{align*}

6) $\entropy_{SM}(\alpha, \beta, m, \sigma^2)$ is computed by substituting formula \eqref{ln-int} into the definition \eqref{eq:SMdens} of the Sharma--Mittal entropy

7) The Kullback--Leibler divergence between two lognormal distributions with parameters $(m, \sigma^2)$ and $(m_1, \sigma_1^2)$ is given by
\begin{align*}
\MoveEqLeft
\entropy_{KL}^{LN} \left(m,m_1,\sigma^2,\sigma_1^2\right) 
= \int_0^\infty p_{m,\sigma^2}(x)\log\left(\frac{p_{m,\sigma^2}(x)}{p_{m_1,\sigma_1^2}(x)}\right) dx
\\
&=  \int_0^\infty p_{m,\sigma^2}(x)
\left(\log\left(\frac{\sigma_1}{\sigma}\right) - \frac{(\log x - m)^2}{2\sigma^2}
+ \frac{(\log x - m_1)^2}{2\sigma_1^2}\right) dx
\\
&= \log\left(\frac{\sigma_1}{\sigma}\right)
 -\frac{1}{2\sigma^2} \ex(\log X_{m,\sigma^2} - m)^2
+ \frac{1}{2\sigma_1^2}\ex(\log X_{m,\sigma^2} - m_1)^2.
\end{align*}
Noting that $\log X_{m,\sigma^2} \simeq \mathcal{N} (m,\sigma^2)$, we compute
\[
\ex(\log X_{m,\sigma^2} - m)^2 = \sigma^2
\quad\text{and}\quad
\ex(\log X_{m,\sigma^2} - m_1)^2
= \sigma^2 + (m-m_1)^2.
\]
Substituting these results completes the proof.
\end{proof}

\subsection{Examples of the calculation of the modified Shannon entropy  for the distributions with bounded density}\label{modify-ex}
In all examples we use equality \eqref{modif} from Proposition \ref{Prop1}. 
\begin{example}[Normal distribution]
    Let $p(x)=\frac{1}{\sigma\sqrt{2\pi}}e^{-\frac{x^2}{2\sigma^2}},\ \sigma>0$. Then $M=\frac{1}{\sigma\sqrt{2\pi}}$,
    $\entropy^{(norm)}_{SH}  =\frac12(1+\log 2\pi)+\log \sigma$, and
    $\entropy^{(norm)}_{SH, M}=\sigma\sqrt{\frac{\pi}{2}}$.
\end{example}
\begin{example}[Gamma distribution]\label{ex:mShga}
    Let $p(x)=\frac{\lambda^\mu}{\Gamma(\mu)}
    x^{\mu-1}e^{-\lambda x}\1_{x>0},\ \mu>0,\ \lambda>0$.
    Then for $\mu\in(0,1)\ p(x)$ is not bounded, and the modified Shannon entropy does not exist.
    Now, let $\mu>1$ (the case $\mu=1$ corresponds to
    Example~\ref{ex:mShexp} below).
    Then $p(x)$ has a unique maximum (in other words, it is a unimodal distribution) at point
    $x=\frac{\mu-1}{\lambda}$, and
    $$
    M=\frac{\lambda^\mu}{\Gamma(\mu)}
    \left(\frac{\mu-1}{\lambda}\right)^{\mu-1}e^{1-\mu}
    =\frac{\lambda}{\Gamma(\mu-1)}
    \left(\mu-1\right)^{\mu-2}e^{1-\mu}.
    $$
    Therefore,
    \begin{align*}
        \entropy_{SH,M}^{(gamma)}
        &=\frac{1}{M} \left(\entropy_{SH}^{(gamma)}+\log M\right)\\
        &=\frac{\Gamma(\mu-1)}{\lambda}(\mu-1)^{2-\mu}e^{\mu-1}
        (-\log\lambda+\log\Gamma(\mu)+\mu\\
        &\quad -\psi(\mu)(\mu-1)+\log\lambda-\log\Gamma(\mu-1)
        +(\mu-2)\log(\mu-1)+1-\mu)\\
        &=\frac{\Gamma(\mu-1)}{\lambda}(\mu-1)^{2-\mu}e^{\mu-1}
        (1+(\mu-1)\log(\mu-1)-\psi(\mu)(\mu-1)).
    \end{align*}
    We can use the recurrent relation for digamma function:
    $\psi(\mu)=\psi(\mu-1)+\frac{1}{\mu-1}$
    and reduce modified entropy to the form
   $$
    \entropy_{SH,M}^{(gamma)}
    =\frac{\Gamma(\mu-1)}{\lambda}(\mu-1)^{2-\mu}e^{\mu-1}
    (\mu-1)(\log(\mu-1)-\psi(\mu-1)),
   $$
   which, in addition, immediately implies that
   $\psi(z)<\log z$ for $z>0$.
   Finally, applying formula~8.362.2 from \cite{GradshteynRyzhik2007}
   (which itself states that
   $\psi(z)-\log z
   =-\sum_{k=0}^\infty \left[\frac{1}{z+k}-\log\left(1+\frac{1}{z+k}\right)
   \right]<0$ for $z>0$), we obtain that
   $$
    \entropy_{SH,M}^{(gamma)}
    =\frac{\Gamma(\mu)(\mu-1)^{2-\mu}e^{\mu-1}}{\lambda}
    \sum_{k=0}^\infty \left[\frac{1}{\mu-1+k}-\log\left(1+\frac{1}{\mu-1+k}\right)
   \right].
   $$    
\end{example}

\begin{example}[Exponential distribution]\label{ex:mShexp}
    Let $p(x)=\lambda e^{-\lambda x}\1_{x\ge 0}$. Then
    $M=\lambda$, and according to \eqref{eq:eshexp},
    $\entropy^{(exp)}_{SH}=1-\log \lambda$, and
    $\entropy^{(exp)}_{SH,M}=\frac{1}{\lambda}$.
\end{example}

\begin{example}[Chi-squared distribution]
Consider the chi-squared distribution with $\nu$ degrees of freedom.
Since it corresponds to a gamma distribution with parameters 
$\mu = \frac\nu2$ and $\lambda = \frac12$ 
(and reduces to the exponential distribution when $\nu = 2$),
we can apply the results from Examples \ref{ex:mShga}--\ref{ex:mShexp} to conclude the following:
\begin{itemize}
\item[$(i)$]
if $\nu = 1$, then the modified Shannon entropy does not exist;
\item[$(ii)$]
if $\nu = 2$, then $\entropy_{SH,M}^{(\chi^2)} = 2$;
\item[$(iii)$]
if $\nu \ge 3$, then 
\[
\entropy_{SH,M}^{(\chi^2)}
= 2 \Gamma\left(\frac\nu2\right) \left(\frac\nu2 - 1\right) ^{2-\frac\nu2}\exp\left\{\frac\nu2-1\right\} \left(\log\left(\frac\nu2-1\right) -\psi\left(\frac\nu2-1\right)\right).
\]
\end{itemize}
As discussed in Example \ref{ex:mShga}, the positivity of the final expression in $(iii)$ arises from the inequality $\psi(z) < \log z$ for $z>0$.
\end{example}
\begin{example}[Laplace  distribution] Let
    $p(x)=\frac{\lambda}{2} e^{-\lambda |x-\mu|},\,x\in\R$. Then $M=\frac{\lambda}{2}$, and $$\entropy_{SH, M}^{(La)}=-\frac{2}{\lambda}\log\left(\frac{\lambda}{2}\right)+\frac{2}{\lambda}+\frac{2}{\lambda}\log\left(\frac{\lambda}{2}\right)=\frac{2}{\lambda}.$$ 
     
\end{example}
\begin{example}[Log-normal  distribution]
Let
$p(x) = \frac{1}{x \sigma \sqrt{2\pi}} \exp\left\{- \frac{(\log x - m)^2}{2\sigma^2}\right\} \1_{x>0}$.
The function $p(x)$ achieves a unique maximum at the point
$x = \exp\{m-\sigma^2\}$
with the corresponding maximum value:
$M = \frac{1}{\sigma\sqrt{2\pi}}\exp\{\frac{\sigma^2}{2} - m\}$.
Then
\begin{align*}
\entropy_{SH,M}^{(LN)}
&= \frac{1}{M}\left(\entropy_{SH}^{(LN)}(m,\sigma^2) + \log M\right)
\\
&= \sigma\sqrt{2\pi}\exp\left \{m - \frac{\sigma^2}{2}\right \}
\left(\log\left(\sigma\sqrt{2\pi}\right)  + m + \frac12 
- \log\left(\sigma\sqrt{2\pi}\right) + \frac{\sigma^2}{2} - m \right)
\\
&= \sigma (\sigma^2+1) \sqrt{\frac\pi2}
\exp\left \{m - \frac{\sigma^2}{2}\right\}.
\end{align*}
\end{example}
\begin{example}[Uniform distribution]
   This distribution is degenerate on any interval $[a,b]$ in the sense of modified entropy because   its $\entropy^{(uni)}_{SH}=-\log\left(\frac{1}{b-a}\right)$, and $\entropy^{(uni)}_{SH, M}=0$.
\end{example}

\section{Dependence of entropies in parameters of distribution}\label{secdeppar}
In this section we consider only three distributions: Poisson, gamma and exponential. For Poisson distribution we give the new proof that its Shannon entropy increases with intensity, for gamma distribution we prove that its entropy decreases with parameter $\lambda$ and increases with parameter $\mu$, and for exponential distribution we study the behavior of all entropies because it can be achieved by similar way.  
\subsection{Poisson distribution}\label{poissparam}
Recall that a random variable $X^{(pois)}_\lambda$ has a Poisson distribution with parameter $\lambda > 0$ if
\[
\Prob\left(X^{(pois)}_\lambda = k\right) = \frac{\lambda^k  e^{-\lambda}}{k!},\quad k \ge 0.
\]

Its Shannon entropy $\entropy_{SH}^{(pois)} (\lambda)$ equals
\begin{equation}\label{entr1}
\entropy_{SH}^{(pois)} (\lambda) = -\sum_{k=0}^{\infty}\frac{\lambda^k e^{-\lambda}}{k!} \log \left(\frac{\lambda^k e^{-\lambda}}{k!}\right)=-\lambda \log \left(\frac{\lambda}{e}\right)+e^{-\lambda} \sum_{k=2}^{\infty} \frac{\lambda^k \log k!}{k!}.
\end{equation}

The following result was initially derived in \cite[Theorem 1]{braiman2024}. We provide a new proof here, followed by a discussion of the advantages of our approach in Remark~\ref{rem:comparison} below.

\begin{theorem}\label{prop:entropy-pois-incr}
The Shannon entropy $\entropy_{SH}^{(pois)}(\lambda)$ strictly increases as a function of~$\lambda$.
\end{theorem}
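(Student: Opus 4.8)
The plan is to differentiate the series \eqref{entr1} in $\lambda$, to recognise the resulting derivative as a Kullback--Leibler divergence between two honest probability distributions, and to conclude its strict positivity from Remark~\ref{remkulb}.

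Set $p_k=p_k(\lambda)=\frac{\lambda^ke^{-\lambda}}{k!}$ for $k\ge0$ and $p_{-1}:=0$. The coefficients of the power series $\sum_{k}\frac{\log k!}{k!}\lambda^k$ decay superexponentially, so $\lambda\mapsto\entropy_{SH}^{(pois)}(\lambda)$ is smooth on $(0,\infty)$ and may be differentiated termwise. Using $\sum_k\dot p_k=0$ (differentiate $\sum_kp_k\equiv1$) and the elementary identity $\dot p_k:=\frac{\partial}{\partial\lambda}p_k=p_{k-1}-p_k$, we get from $\entropy_{SH}^{(pois)}(\lambda)=-\sum_{k\ge0}p_k\log p_k$ that
\begin{align*}
\frac{d}{d\lambda}\entropy_{SH}^{(pois)}(\lambda)
&=-\sum_{k\ge0}\dot p_k\log p_k
=\sum_{j\ge0}p_j\log\frac{p_j}{p_{j+1}}\\
&=\sum_{j\ge0}p_j\log\frac{j+1}{\lambda}
=\ex\bigl[\log(X^{(pois)}_\lambda+1)\bigr]-\log\lambda ,
\end{align*}
where we used $p_j/p_{j+1}=(j+1)/\lambda$. (The same formula follows directly from \eqref{entr1}: $-\lambda\log(\lambda/e)$ has derivative $-\log\lambda$, while $\frac{d}{d\lambda}\sum_{k\ge2}p_k\log k!=\sum_{k\ge2}(p_{k-1}-p_k)\log k!=\sum_{j\ge0}p_j\log(j+1)$ after reindexing.)

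Now let $\mathbf{Q}$ be the distribution of $X^{(pois)}_\lambda$ and $\mathbf{P}$ the distribution of $X^{(pois)}_\lambda+1$, so that $\mathbf{P}(\{k\})=p_{k-1}$ and $\mathbf{Q}(\{k\})=p_k$, and $\frac{d\mathbf{P}}{d\mathbf{Q}}(k)=p_{k-1}/p_k=k/\lambda$. With the usual convention $0\log0=0$,
\[
\entropy_{SH}(\mathbf{P}||\mathbf{Q})
=\sum_{k\ge0}p_{k-1}\log\frac{p_{k-1}}{p_k}
=\sum_{j\ge0}p_j\log\frac{p_j}{p_{j+1}}
=\frac{d}{d\lambda}\entropy_{SH}^{(pois)}(\lambda).
\]
By Remark~\ref{remkulb}, $\entropy_{SH}(\mathbf{P}||\mathbf{Q})\ge0$, with equality if and only if $\mathbf{P}=\mathbf{Q}$; but $\mathbf{P}(\{0\})=0\ne e^{-\lambda}=\mathbf{Q}(\{0\})$, so the inequality is strict. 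Hence $\frac{d}{d\lambda}\entropy_{SH}^{(pois)}(\lambda)>0$ for every $\lambda>0$, i.e.\ $\entropy_{SH}^{(pois)}$ is strictly increasing. Equivalently, the derivative equals $\ex[\varphi(X^{(pois)}_\lambda/\lambda)]$ with $\varphi(x)=x\log x$ strictly convex and $\ex[X^{(pois)}_\lambda/\lambda]=1$, so Jensen's inequality (which is precisely the mechanism behind Remark~\ref{remkulb}) gives the claim, strictly because $X^{(pois)}_\lambda/\lambda$ is non-degenerate.

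No step is genuinely hard: the only things to check carefully are the legitimacy of termwise differentiation (immediate from local uniform convergence of the series and of its derivative) and the index shifts in the two rearrangements. The whole point is the structural identity ``derivative of the Poisson Shannon entropy $=$ a relative entropy'', which is the simplification over the more computational argument of \cite{braiman2024} promised in Remark~\ref{rem:comparison}.
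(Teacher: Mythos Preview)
Your proof is correct and takes a genuinely different route from the paper's. Both arguments begin by computing the derivative
\[
\frac{d}{d\lambda}\entropy_{SH}^{(pois)}(\lambda)=\sum_{j\ge0}p_j\log\frac{j+1}{\lambda}=\ex\bigl[\log(X^{(pois)}_\lambda+1)\bigr]-\log\lambda,
\]
but then diverge. The paper proceeds analytically: it shows the \emph{second} derivative is negative (via the bound $\log(1+\tfrac{1}{i+1})\le\tfrac{1}{i+1}$), so the first derivative is strictly decreasing, and then proves in a separate lemma (using l'H\^opital and series manipulations) that $\lim_{\lambda\to\infty}\frac{d}{d\lambda}\entropy_{SH}^{(pois)}(\lambda)=0$, forcing the derivative to be positive everywhere. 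You instead recognise the derivative directly as the Kullback--Leibler divergence $\entropy_{SH}(\mathbf{P}\|\mathbf{Q})$ between the laws of $X^{(pois)}_\lambda+1$ and $X^{(pois)}_\lambda$ (equivalently, as $\ex[\varphi(X^{(pois)}_\lambda/\lambda)]$ with $\varphi(x)=x\log x$ strictly convex and mean $1$), so strict positivity follows in one line from Remark~\ref{remkulb} or Jensen.

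Your argument is shorter and more structural; it explains \emph{why} the derivative is positive rather than verifying it by estimation. What the paper's longer route buys is two extra pieces of information that your method does not deliver automatically: that the derivative is strictly decreasing in $\lambda$, and that it tends to $0$ as $\lambda\to\infty$ (Lemma~\ref{l:derHtozero}). A minor caveat: your closing sentence about ``the simplification \dots\ promised in Remark~\ref{rem:comparison}'' overstates the match --- that remark compares the paper's own proof to \cite{braiman2024}, not to a KL-divergence argument --- so you may want to rephrase that.
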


\begin{proof}
According to \eqref{entr1},
\[
\entropy_{SH}^{(pois)} (\lambda)
= e^{-\lambda} \sum_{i = 0}^{\infty} \frac{\lambda^i} {i!} \log i! - \lambda \log\lambda + \lambda.
\]
Note that $\sum_{i = 0}^{\infty} \frac{\lambda^i} {i!} \log i!$ is a power series, which converges for all $\lambda\in\R$; consequently it is infinitely differentiable.
The first derivative of $\entropy_{SH}^{(pois)}(\lambda)$ equals
\begin{align*}
\frac{d}{d\lambda}\entropy_{SH}^{(pois)}(\lambda)
&= - e^{-\lambda} \sum_{i = 0}^{\infty} \frac{\lambda^i} {i!} \log i!
+  e^{-\lambda} \sum_{i = 1}^{\infty} \frac{\lambda^{i-1}}{(i-1)!} \log i!
- \log\lambda
\\
&= e^{-\lambda} \sum_{i = 1}^{\infty} \frac{\lambda^i}{i!} \log (i+1)
- \log\lambda.
\end{align*}
We need to prove that $\frac{d}{d\lambda}\entropy_{SH}^{(pois)}(\lambda) > 0$ for all $\lambda>0$. To this end, we compute the second derivative:
\begin{align*}
\frac{d^2}{d\lambda^2}\entropy_{SH}^{(pois)}(\lambda)
&= - e^{-\lambda} \sum_{i = 1}^{\infty} \frac{\lambda^i}{i!} \log (i+1)
+ e^{-\lambda} \sum_{i = 1}^{\infty} \frac{\lambda^{i-1}}{(i-1)!} \log (i+1) - \frac1\lambda
\\
&= e^{-\lambda} \sum_{i = 1}^{\infty} \frac{\lambda^i}{i!} \log \frac{i+2}{i+1} +  e^{-\lambda} \log 2 - \frac1\lambda.
\end{align*}
Now, we apply the elementary inequality
$\log \frac{i+2}{i+1} = \log (1 + \frac{1}{i+1}) \le \frac{1}{i+1}$
and get
\begin{align*}
\frac{d^2}{d\lambda^2}\entropy_{SH}^{(pois)}(\lambda)
&\le e^{-\lambda} \sum_{i = 1}^{\infty} \frac{\lambda^i}{(i+1)!} +  e^{-\lambda} \log 2 - \frac1\lambda\\
&=  e^{-\lambda} \, \frac1\lambda \left(e^\lambda - 1 - \lambda \right) +  e^{-\lambda} \log 2 - \frac1\lambda
\\
&= -\frac1\lambda e^{-\lambda} + e^{-\lambda} \left(\log2 - 1\right) < 0.
\end{align*}
Thus, the first derivative $\frac{d}{d\lambda}\entropy_{SH}^{(pois)}(\lambda)$ strictly decreases.
Therefore, in order to prove its positivity, it suffices to show that
\begin{equation}\label{eq:lim-der-pos}
\lim_{\lambda\to\infty}\frac{d}{d\lambda}\entropy_{SH}^{(pois)}(\lambda) \ge 0.
\end{equation}
In fact, we can establish an even more precise result: this limit is exactly zero. This will be proven in Lemma~\ref{l:derHtozero} below.
\end{proof}

\begin{lemma}\label{l:derHtozero}
\begin{equation}\label{eq:derHtozero}
\lim_{\lambda\to\infty}\frac{d}{d\lambda}\entropy_{SH}^{(pois)}(\lambda) = 0.
\end{equation}
\end{lemma}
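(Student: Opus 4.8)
The plan is to recast the derivative probabilistically and then squeeze it between two quantities that both tend to zero. Let $X_\lambda$ be a random variable with the Poisson distribution of parameter $\lambda$. From the formula for $\frac{d}{d\lambda}\entropy_{SH}^{(pois)}(\lambda)$ obtained in the proof of Theorem~\ref{prop:entropy-pois-incr}, and noting that the $i=0$ term contributes $\log 1 = 0$, we have
\[
\frac{d}{d\lambda}\entropy_{SH}^{(pois)}(\lambda)
= e^{-\lambda}\sum_{i=0}^{\infty}\frac{\lambda^i}{i!}\log(i+1) - \log\lambda
= \ex\bigl[\log(X_\lambda+1)\bigr] - \log\lambda
= \ex\!\left[\log\frac{X_\lambda+1}{\lambda}\right].
\]
So it suffices to show that this last expectation tends to $0$ as $\lambda\to\infty$.

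For the upper bound, I would apply Jensen's inequality to the concave function $\log$: since $\ex[X_\lambda+1] = \lambda+1$,
\[
\ex\!\left[\log\frac{X_\lambda+1}{\lambda}\right] \le \log\frac{\ex[X_\lambda+1]}{\lambda} = \log\!\left(1+\frac1\lambda\right) \longrightarrow 0 \quad\text{as }\lambda\to\infty.
\]
For the lower bound, fix $\varepsilon\in(0,1)$. On the event $A_\lambda=\{X_\lambda \ge (1-\varepsilon)\lambda\}$ one has $\log\frac{X_\lambda+1}{\lambda} \ge \log(1-\varepsilon)$, while on its complement $\log\frac{X_\lambda+1}{\lambda} \ge \log\frac1\lambda = -\log\lambda$ because $X_\lambda\ge 0$. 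Splitting the expectation accordingly gives
\[
\ex\!\left[\log\frac{X_\lambda+1}{\lambda}\right]
\ge \log(1-\varepsilon)\,\Prob(A_\lambda) - \log\lambda\cdot\Prob\bigl(X_\lambda < (1-\varepsilon)\lambda\bigr).
\]
Since $\ex X_\lambda = \lambda$ and $\Var X_\lambda = \lambda$, Chebyshev's inequality yields $\Prob(X_\lambda < (1-\varepsilon)\lambda) \le \Prob(|X_\lambda-\lambda|\ge\varepsilon\lambda) \le \frac{1}{\varepsilon^2\lambda}$, so $\Prob(A_\lambda)\to 1$ and $\log\lambda\cdot\Prob(X_\lambda<(1-\varepsilon)\lambda)\to 0$. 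Hence $\liminf_{\lambda\to\infty}\ex[\log\frac{X_\lambda+1}{\lambda}] \ge \log(1-\varepsilon)$, and letting $\varepsilon\downarrow 0$ gives $\liminf\ge 0$. Together with the upper bound this proves \eqref{eq:derHtozero}.

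The only delicate point — and the place where a naive argument would fail — is the lower tail estimate: the "bad'' event that $X_\lambda$ lies well below its mean gets multiplied by the unbounded factor $\log\lambda$, so one needs the Poisson lower tail to decay fast enough to annihilate it. A plain Chebyshev bound already does the job (a Chernoff bound would give exponential decay, but it is not needed). An alternative, slightly heavier route would be to verify uniform integrability of the family $\{\log\frac{X_\lambda+1}{\lambda}\}_{\lambda\ge 1}$ and combine it with the convergence in probability $\frac{X_\lambda+1}{\lambda}\to 1$; but the two-sided sandwich above is shorter and entirely elementary.
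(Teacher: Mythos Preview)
Your proof is correct. The upper bound via Jensen's inequality is exactly what the paper does. For the lower bound, however, you take a genuinely different and considerably shorter route: you recast the derivative as $\ex\bigl[\log\frac{X_\lambda+1}{\lambda}\bigr]$, split on the event $\{X_\lambda\ge(1-\varepsilon)\lambda\}$, and kill the contribution of the bad event with Chebyshev's inequality, since $\frac{\log\lambda}{\varepsilon^2\lambda}\to 0$. The paper instead stays on the analytic side: it writes the derivative as $\log\lambda\bigl(\frac{e^{-\lambda}}{\log\lambda}\sum_i\frac{\lambda^i}{i!}\log(i+1)-1\bigr)$, invokes an auxiliary lemma to justify that the series tends to infinity, applies l'H\^opital's rule twice with term-by-term differentiation, and then uses the inequality $\log(1+x)\ge x-\tfrac{x^2}{2}$ to reduce to two explicit series whose limits are computed by hand. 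Your probabilistic squeeze avoids all of that machinery---no auxiliary lemma, no l'H\^opital, no series manipulations---at the cost of being slightly less explicit about rates. Both arguments are elementary, but yours is markedly more economical.
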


\begin{proof}
We start with proving the inequality \eqref{eq:lim-der-pos}.
We can represent $\frac{d}{d\lambda}\entropy_{SH}^{(pois)}(\lambda)$ in the following form
\[
\frac{d}{d\lambda}\entropy_{SH}^{(pois)}(\lambda)
= \log\lambda \left(\frac{e^{-\lambda}}{\log\lambda} \sum_{i = 1}^{\infty} \frac{\lambda^i}{i!} \log (i+1)
- 1\right) .
\]
Hence, to get \eqref{eq:lim-der-pos}, it suffices to prove that
\begin{equation}\label{eq:lim-der-low}
\lim_{\lambda\to\infty} \frac{e^{-\lambda}}{\log\lambda} \sum_{i = 1}^{\infty} \frac{\lambda^i}{i!} \log (i+1) \ge 1.
\end{equation}
According to Lemma \ref{l:series-to-inf} in Appendix, the function
$e^{-\lambda}\sum_{i = 1}^{\infty} \frac{\lambda^i}{i!} \log (i+1)$
tends to infinity as $\lambda\to\infty$.
Then applying the l'H\^opital rule and taking into account the possibility of term-by-term differentiation of the series, we get
\begin{align}
\MoveEqLeft
\lim_{\lambda\to\infty} \frac{e^{-\lambda}}{\log\lambda} \sum_{i = 1}^{\infty} \frac{\lambda^i}{i!} \log (i+1)
\notag\\
&= \lim_{\lambda\to\infty} \lambda \left(-e^{-\lambda} \sum_{i = 1}^{\infty} \frac{\lambda^i}{i!} \log (i+1)
+ e^{-\lambda} \sum_{i = 1}^{\infty} \frac{\lambda^{i-1}}{(i-1)!} \log (i+1)\right)
\notag\\
&= \lim_{\lambda\to\infty} \lambda e^{-\lambda} \left(-\sum_{i = 0}^{\infty} \frac{\lambda^i}{i!} \log (i+1)
+  \sum_{i = 0}^{\infty} \frac{\lambda^{i}}{i!} \log (i+2)\right)
\notag\\
&= \lim_{\lambda\to\infty} e^{-\lambda}
  \sum_{i = 0}^{\infty} \frac{\lambda^{i+1}}{i!} \log \left(1+\frac{1}{i+1}\right)
\ge \lim_{\lambda\to\infty} e^{-\lambda}
  \sum_{i = 0}^{\infty} \frac{\lambda^{i+1}}{i!} \left(\frac{1}{i+1} - \frac{1}{2(i+1)^2}\right)
\notag\\
&= \lim_{\lambda\to\infty} e^{-\lambda} \left(
  \sum_{i = 1}^{\infty} \frac{\lambda^{i}}{i!}
  - \frac12 \sum_{i = 1}^{\infty} \frac{\lambda^i}{i!\,i}\right),
\label{eq:lim-der-low1}
\end{align}
where we have used the elementary inequality
$\log (1 + x) \ge x - x^2/2$.
Obviously,
\begin{equation}\label{eq:lim-der-low2}
\lim_{\lambda\to\infty} e^{-\lambda}
  \sum_{i = 1}^{\infty} \frac{\lambda^{i}}{i!}
= \lim_{\lambda\to\infty} \frac{e^\lambda - 1}{e^\lambda}
= 1.
\end{equation}
Moreover, applying the l'H\^opital rule again and differentiating the series term-by-term, we obtain
\begin{equation}\label{eq:lim-der-low3}
\lim_{\lambda\to\infty}\frac{\sum_{i = 1}^{\infty} \frac{\lambda^i}{i!\,i}}{e^{\lambda}}
= \lim_{\lambda\to\infty}\frac{\sum_{i = 1}^{\infty} \frac{\lambda^{i-1}}{i!}}{e^{\lambda}}
= \lim_{\lambda\to\infty}\frac{e^\lambda - 1}{\lambda e^{\lambda}}
= 0.
\end{equation}
Combining \eqref{eq:lim-der-low1}--\eqref{eq:lim-der-low3} we get \eqref{eq:lim-der-low}, whence \eqref{eq:lim-der-pos} follows.

Now, using the Jensen inequality, we obtain the following upper bound in addition to the lower bound \eqref{eq:lim-der-pos}:
\begin{align*}
\lim_{\lambda\to\infty}\frac{d}{d\lambda}\entropy_{SH}^{(pois)}(\lambda)
&= \lim_{\lambda\to\infty} \left( e^{-\lambda} \sum_{i = 1}^{\infty} \frac{\lambda^i}{i!} \log (i+1)
- \log\lambda \right)\\
&= \lim_{\lambda\to\infty}\left( \ex\log \left(X^{(pois)}_\lambda + 1\right) - \log\lambda\right)
\\
&\le \lim_{\lambda\to\infty}\left( \log \left (\ex X^{(pois)}_\lambda + 1\right ) - \log\lambda\right)
=\lim_{\lambda\to\infty}\log \frac{\lambda+1}{\lambda} = 0.
\end{align*}
This concludes the proof \eqref{eq:derHtozero}.
\end{proof}

\begin{remark}\label{rem:comparison}
In comparison to the proof of Proposition \ref{prop:entropy-pois-incr} presented in \cite{braiman2024}, we note the following:
\begin{itemize} 
\item[$(i)$] In our Lemma~\ref{l:derHtozero}, we calculate the exact value of the limit, whereas in \cite{braiman2024}, the authors only establish a lower bound \eqref{eq:lim-der-pos}.
\item[$(ii)$] The inequality \eqref{eq:lim-der-low} was proved in \cite[Lemma A.1]{braiman2024} using a different approach, employing the two-sided Stirling approximation for factorials. In contrast, our proof -- based on series differentiation and elementary inequalities -- is significantly more concise.
\end{itemize}
\end{remark}

 
\subsection{Gamma distribution}
\begin{theorem}
    The Shannon entropy of gamma distribution decreases in $\lambda>0$ from $+\infty$ to $-\infty$, for any $\mu>0$, and increases from $-\infty$ to $+\infty$ in $\mu>0$ for any $\lambda>0$.
\end{theorem}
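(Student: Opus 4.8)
The plan is to separate the two variables. By formula~\eqref{eq:gamsh} we may write $\entropy_{SH}^{(gamma)}(\lambda,\mu)=-\log\lambda+h(\mu)$, where $h(\mu):=\log\Gamma(\mu)+\mu-(\mu-1)\psi(\mu)$ is finite for every $\mu>0$. Since $\lambda\mapsto-\log\lambda$ is strictly decreasing on $(0,\infty)$ with $-\log\lambda\to+\infty$ as $\lambda\to0^+$ and $-\log\lambda\to-\infty$ as $\lambda\to\infty$, the assertion concerning the behaviour in $\lambda$ is immediate, and everything else reduces to a study of the single function $h$.

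For the dependence in $\mu$, I would first compute $h'$. Using $\psi=(\log\Gamma)'$ and the product rule, the two $\psi(\mu)$-terms cancel and one gets $h'(\mu)=1-(\mu-1)\psi'(\mu)$. The target is $h'(\mu)>0$ for all $\mu>0$. For $0<\mu\le1$ this is clear, since $\mu-1\le0$ while $\psi'(\mu)>0$, so $(\mu-1)\psi'(\mu)\le0<1$. For $\mu>1$ it suffices to prove the trigamma bound $\psi'(\mu)<\frac{1}{\mu-1}$. This is precisely the integral comparison used in Remark~\ref{gammakulb}, run in the opposite direction: from $\psi'(\mu)=\sum_{k\ge0}(\mu+k)^{-2}$ and $(\mu+k)^{-2}<\int_{\mu+k-1}^{\mu+k}x^{-2}\,dx$ one obtains $\psi'(\mu)<\int_{\mu-1}^{\infty}x^{-2}\,dx=\frac{1}{\mu-1}$. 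Hence $(\mu-1)\psi'(\mu)<1$, so $h'(\mu)>0$, and $\entropy_{SH}^{(gamma)}(\lambda,\mu)$ is strictly increasing in $\mu$ for every fixed $\lambda$.

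It then remains to identify the limits of $h$ at the endpoints, which is the only place requiring a genuine (but routine) computation. As $\mu\to\infty$, Stirling's expansion $\log\Gamma(\mu)=\mu\log\mu-\mu-\tfrac12\log\mu+\tfrac12\log(2\pi)+o(1)$ and $\psi(\mu)=\log\mu-\tfrac{1}{2\mu}+O(\mu^{-2})$ make the $\mu\log\mu$ and $\mu$ contributions cancel, leaving $h(\mu)=\tfrac12\log\mu+\tfrac12\log(2\pi)+\tfrac12+o(1)\to+\infty$. As $\mu\to0^+$, I would use the recurrences $\Gamma(\mu)=\Gamma(\mu+1)/\mu$ and $\psi(\mu)=\psi(\mu+1)-1/\mu$ together with $\Gamma(\mu+1)\to1$, $\psi(\mu+1)\to-\gamma$, giving $\log\Gamma(\mu)=-\log\mu+O(\mu)$ and $(\mu-1)\psi(\mu)=\tfrac1\mu+O(1)$, hence $h(\mu)=-\log\mu-\tfrac1\mu+O(1)\to-\infty$ because $1/\mu$ dominates $\log(1/\mu)$. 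Combined with strict monotonicity, this shows that $h$, and hence $\entropy_{SH}^{(gamma)}(\lambda,\mu)$ for any fixed $\lambda$, increases from $-\infty$ to $+\infty$.

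The main obstacle is really just the $\mu>1$ case of $h'>0$: one needs the \emph{upper} trigamma bound $\psi'(\mu)<1/(\mu-1)$, which is slightly less standard than the lower bound $\psi'(\mu)>1/\mu$ already exploited in Remark~\ref{gammakulb}, although it follows from the same one-line integral comparison. The endpoint asymptotics, while a bit fiddly, involve only the classical expansions of $\log\Gamma$ and $\psi$ and present no real difficulty.
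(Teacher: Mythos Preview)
Your proposal is correct and follows essentially the same route as the paper: separate variables, compute $h'(\mu)=1-(\mu-1)\psi'(\mu)$, handle $\mu\le1$ trivially, and use an integral comparison with the trigamma series for $\mu>1$, then finish with the standard $\Gamma$/$\psi$ asymptotics at the endpoints. The only noteworthy difference is that you obtain the bound $(\mu-1)\psi'(\mu)<1$ in one stroke via $\psi'(\mu)<\int_{\mu-1}^\infty x^{-2}\,dx=\tfrac{1}{\mu-1}$, whereas the paper first applies the recurrence $\psi'(z+1)=\psi'(z)-z^{-2}$ and then proves the equivalent inequality $\psi'(\mu)<\tfrac{1}{\mu}+\tfrac{1}{\mu^2}$ by keeping the first series term and bounding the tail; your version is a mild streamlining of the same idea.
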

\begin{proof}
    According to equality \eqref{eq:gamsh},
    $\entropy_{SH}^{(gamma)}(\lambda,\mu)$ decreases in $\lambda>0$.
    Its behaviour in $\mu$ is more involved.

Denote $SH(\mu)=\log\Gamma(\mu)+\mu-\psi(\mu)(\mu-1)$ and differentiate it:
\begin{equation*}
    (SH(\mu))'=\psi(\mu)+1-\psi'(\mu)(\mu-1)-\psi(\mu)
    =1-\psi'(\mu)(\mu-1).
\end{equation*}
It follows, for example, from the representation
\begin{equation*}
   \psi'(z) =\sum_{n=0}^{\infty}\frac{1}{(z+n)^2},\ z>0,
\end{equation*}
that $\psi'(\mu)>0$ and decreases in $\mu$. For $\mu<1$, obviously,
$1-\psi'(\mu)(\mu-1)>0$, therefore,
$\entropy_{SH}^{(gamma)}(\lambda,\mu)$ increases in $\mu\in(0,1)$.

Now, let $\mu>1$. Applying recurrent relation
$\psi'(z+1) =\psi'(z) -\frac{1}{z^2}$, we come to the formula
$$
1-\left(\psi'(\mu-1)-\frac{1}{(\mu-1)^2}\right)(\mu-1), \ \mu>1,
$$
or, that is the same,
$$
1-\left(\psi'(\mu)-\frac{1}{\mu^2}\right)\mu, \ \mu>0.
$$
 In other words, we need to consider the sign of the value 
 $$
 \frac{1}{\mu}+\frac{1}{\mu^2}-\psi'(\mu),\ \mu>0.
 $$
However, it is evident that
\begin{align*}
    \psi'(\mu)&=\frac{1}{\mu^2}+\frac{1}{(\mu+1)^2}+\frac{1}{(\mu+2)^2}+\ldots
    < \frac{1}{\mu^2}+\int_{\mu}^{\mu+1}\frac{dx}{x^2}
    +\int_{\mu+1}^{\mu+2}\frac{dx}{x^2}  +\ldots\\
    &=\frac{1}{\mu^2}+\int_{\mu}^{\infty}\frac{dx}{x^2}
    =\frac{1}{\mu^2}+\frac{1}{\mu},
\end{align*}
therefore,
$\frac{1}{\mu^2}+\frac{1}{\mu}-\psi'(\mu)>0,\ \mu>0$,
and it means that for any $\lambda>0$ the entropy
$\entropy_{SH}^{(gamma)}(\lambda,\mu)$ increases in $\mu>0$.

Let us calculate
\begin{align*}
\lim_{\mu \downarrow 0} SH(\mu)
&= \lim_{\mu \downarrow 0} \bigl(\log\Gamma(\mu) - \psi(\mu)(\mu-1)\bigr)
\\
&= \lim_{\mu \downarrow 0} \left (\log\frac{\Gamma(\mu+1)}{\mu} - \left( \psi(\mu + 1) - \frac1\mu\right)(\mu-1)\right)
\\
&= \lim_{\mu \downarrow 0} \left (\log\frac{1}{\mu}+ \psi(1) + 1 - \frac1\mu\right) = -\infty.
\end{align*}
Here we have used the recurrent equality
$\psi(z+1) = \psi(z) + \frac1z$.

Furthermore, using asymptotic relations for $\Gamma(\mu)$ and $\psi(\mu)$, we obtain
\begin{align*}
\lim_{\mu \to\infty} SH(\mu)
&= \lim_{\mu \to \infty} \bigl(\log\Gamma(\mu) +\mu - \psi(\mu)(\mu-1)\bigr)
\\
&\ge \lim_{\mu \to \infty} \left (\log\left(\sqrt{2\pi\mu}\left(\frac{\mu}{e}\right)^\mu\right) + \mu - \left( \log\mu - \frac{1}{2\mu}\right)(\mu-1)\right)
\\
&= \lim_{\mu \to \infty} \left (\frac12\log(2\pi\mu) + \mu\log\mu -  \mu\log\mu  +\frac12 + \log\mu - \frac{1}{2\mu}\right)
\\
&= \lim_{\mu \to \infty} \left (\frac12\log(2\pi\mu) +\frac12 + \log\mu - \frac{1}{2\mu}\right)
= +\infty. \qedhere
\end{align*}
\end{proof}

\subsection{Exponential  distribution}
Now, let us analyze some properties of the entropies of exponential distribution. 
\begin{theorem}\label{prop:entropy-exp-decr}
\begin{enumerate}[1)]
\item The Shannon entropy $\entropy_{SH}^{(exp)}(\lambda)$,
the R\'{e}nyi entropy $\entropy_{R}^{(exp)}(\alpha,\lambda)$ with parameter $\alpha>0$,
the generalized R\'{e}nyi entropy $\entropy_{GR}^{(exp)}(\alpha,\lambda)$ with parameter $\alpha>0$ and
the generalized R\'{e}nyi entropy $\entropy_{GR}^{(exp)}(\alpha,\beta,\lambda)$ with parameters
$\alpha,\beta>0$, $\alpha\neq \beta$,
strictly decrease in~$\lambda\in(0,+\infty)$ from $+\infty$ to $-\infty$,
and
\begin{equation}\label{eq:expshelims}
    \entropy_{SH}^{(exp)}(e)
    =\entropy_{R}^{(exp)}(\alpha,\alpha^{\frac{1}{\alpha-1}})
    =\entropy_{GR}^{(exp)}(\alpha,e^{\frac{1}{\alpha}})
    =\entropy_{GR}^{(exp)}\left(\alpha,\beta,\left(\frac{\beta}{\alpha}\right)^{\frac{1}{\beta-\alpha}}\right)
    =0.
\end{equation}

\item For any $\lambda>0$ we have the following relations: if $\alpha<1$, then $\entropy_{R}^{(exp)}(\alpha,\lambda)>\entropy_{SH}^{(exp)}(\lambda)$,  while for  $\alpha>1$ $\entropy_{R}^{(exp)}(\alpha,\lambda)<\entropy_{SH}^{(exp)}(\lambda)$. Moreover, $$\lim_{\alpha\to 1}\entropy_{R}^{(exp)}(\alpha,\lambda)=\entropy_{SH}^{(exp)}(\lambda),$$
for any fixed $\lambda>0$ the R\'{e}nyi entropy decreases in $\alpha\in(0,+\infty)$ from $+\infty$ to $-\log\lambda.$

\item For any $0<\alpha<1$ the Tsallis entropy $\entropy_{T}^{(exp)}(\alpha,\lambda)$ strictly decreases in $\lambda~\in~(0,+\infty)\ $ from $\ +\infty\ $ to $\ \frac{1}{\alpha-1}<0$ and for any $\alpha>1\ $ $\entropy_{T}^{(exp)}(\alpha,\lambda)$ strictly decrease in~$\lambda\in(0,+\infty)$ from $\frac{1}{\alpha-1}>0$ to $-\infty$, and for any $\alpha>0$, $\alpha\neq 1$ we have
\begin{equation}
    \entropy_{T}^{(exp)}(\alpha,\alpha^{\frac{1}{\alpha-1}}))=0.
    \label{eq:etexplim}
\end{equation}

\item For any $\beta<1$ Sharma--Mittal entropy $\entropy_{SM}^{(exp)}(\alpha,\beta,\lambda)$ with parameters $\alpha,\beta>0, \alpha\neq \beta, \alpha\neq 1, \beta\neq1$ strictly decrease in $\lambda~\in~(0,+\infty)\ $ from $\ +\infty\ $ to $\ \frac{1}{\beta-1}<0$ and for any $\beta>1\ $ $\entropy_{SM}^{(exp)}(\alpha,\beta,\lambda)$ strictly decrease in~$\lambda\in(0,+\infty)$ from $\frac{1}{\beta-1}>0$ to $-\infty$, and for any $\alpha,\beta>0, \alpha\neq \beta, \alpha\neq 1, \beta\neq1$ we have
\begin{equation}
    \entropy_{SM}^{(exp)}(\alpha,\beta,\alpha^{\frac{1}{\alpha-1}}))=0.
    \label{eq:esmexplim}
\end{equation}
\end{enumerate}
\end{theorem}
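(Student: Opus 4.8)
The plan is to use the closed-form expressions for the entropies of the exponential distribution derived above; in each case monotonicity in $\lambda$ reduces to differentiating an elementary function, the limiting values are read off from the behaviour of $\log\lambda$ or of $\lambda^{\gamma}$ at $0$ and $\infty$, and the zeros are found by solving the resulting elementary equation. For part 1), note that $\entropy_{SH}^{(exp)}(\lambda)$, $\entropy_{R}^{(exp)}(\alpha,\lambda)$, $\entropy_{GR}^{(exp)}(\alpha,\lambda)$ and $\entropy_{GR}^{(exp)}(\alpha,\beta,\lambda)$ are all of the form $-\log\lambda+c$, where $c$ does not depend on $\lambda$ and equals, respectively, $1$, $\frac{\log\alpha}{\alpha-1}$, $\frac{1}{\alpha}$ and $\frac{1}{\beta-\alpha}\log\frac{\beta}{\alpha}$. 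Hence each of them is strictly decreasing on $(0,+\infty)$ and attains every real value, so it runs from $+\infty$ to $-\infty$; the unique zero is at $\lambda=e^{c}$, and substituting the four values of $c$ produces precisely the identities in \eqref{eq:expshelims}.

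For part 2), from the formulas above $\entropy_{R}^{(exp)}(\alpha,\lambda)-\entropy_{SH}^{(exp)}(\lambda)=\frac{\log\alpha}{\alpha-1}-1=\frac{\log\alpha-(\alpha-1)}{\alpha-1}$. Since $\log x\le x-1$ for $x>0$ with equality only at $x=1$, the numerator is strictly negative for $\alpha\neq1$, so dividing by $\alpha-1$ gives a positive value for $\alpha<1$ and a negative one for $\alpha>1$, which is the claimed comparison. The limit $\lim_{\alpha\to1}\frac{\log\alpha}{\alpha-1}=1$ (e.g.\ by l'H\^opital, or because it is the derivative of $\log$ at $1$) yields $\lim_{\alpha\to1}\entropy_{R}^{(exp)}(\alpha,\lambda)=-\log\lambda+1=\entropy_{SH}^{(exp)}(\lambda)$. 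Monotonicity in $\alpha$ amounts to showing that $h(\alpha):=\frac{\log\alpha}{\alpha-1}$ (set $h(1):=1$) is strictly decreasing on $(0,+\infty)$: I would compute $h'(\alpha)=\frac{1-\frac{1}{\alpha}-\log\alpha}{(\alpha-1)^2}$ and observe that $\phi(\alpha):=1-\frac{1}{\alpha}-\log\alpha$ satisfies $\phi(1)=0$ and $\phi'(\alpha)=\frac{1-\alpha}{\alpha^2}$, so $\phi$ has its maximum $0$ at $\alpha=1$ and is therefore $\le0$, strictly for $\alpha\neq1$; hence $h'<0$. Finally $h(\alpha)\to+\infty$ as $\alpha\to0^+$ and $h(\alpha)\to0$ as $\alpha\to\infty$, so $\entropy_{R}^{(exp)}(\alpha,\lambda)$ decreases from $+\infty$ to $-\log\lambda$.

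For parts 3) and 4) the point is that differentiation in $\lambda$ makes the sign transparent and uniform in the other parameters. Writing $\entropy_{T}^{(exp)}(\alpha,\lambda)=\frac{\lambda^{\alpha-1}-\alpha}{\alpha(1-\alpha)}$, one gets $\frac{d}{d\lambda}\entropy_{T}^{(exp)}(\alpha,\lambda)=\frac{(\alpha-1)\lambda^{\alpha-2}}{\alpha(1-\alpha)}=-\frac{\lambda^{\alpha-2}}{\alpha}<0$, so $\entropy_{T}^{(exp)}$ is strictly decreasing in $\lambda$ for every $\alpha>0$, $\alpha\neq1$; the endpoint values follow by letting $\lambda\to0^+$ and $\lambda\to\infty$, using $\lambda^{\alpha-1}\to+\infty,0$ when $\alpha<1$ and $\lambda^{\alpha-1}\to0,+\infty$ when $\alpha>1$ together with the sign of $\alpha(1-\alpha)$, which gives the limits $+\infty,\frac{1}{\alpha-1}$ and $\frac{1}{\alpha-1},-\infty$ respectively, and solving $\lambda^{\alpha-1}=\alpha$ gives the zero $\lambda=\alpha^{1/(\alpha-1)}$ of \eqref{eq:etexplim}. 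Likewise, with $C:=\alpha^{\frac{1-\beta}{\alpha-1}}>0$ we have $\entropy_{SM}^{(exp)}(\alpha,\beta,\lambda)=\frac{C\lambda^{\beta-1}-1}{1-\beta}$, so $\frac{d}{d\lambda}\entropy_{SM}^{(exp)}(\alpha,\beta,\lambda)=-C\lambda^{\beta-2}<0$; the limits at $0$ and $\infty$, now governed by the sign of $\frac{1}{1-\beta}$ and by $\lambda^{\beta-1}\to+\infty$ or $0$, give $+\infty,\frac{1}{\beta-1}$ for $\beta<1$ and $\frac{1}{\beta-1},-\infty$ for $\beta>1$, and solving $C\lambda^{\beta-1}=1$, i.e.\ $\lambda^{\beta-1}=\alpha^{\frac{\beta-1}{\alpha-1}}$, yields $\lambda=\alpha^{1/(\alpha-1)}$, which is \eqref{eq:esmexplim}. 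The only genuinely delicate point in the whole argument is the sign bookkeeping when $\alpha$ (or $\beta$) crosses $1$, since several factors change sign simultaneously; everything else is direct differentiation and substitution.
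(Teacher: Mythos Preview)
Your proof is correct and follows essentially the same approach as the paper: for part~1) both you and the paper observe that the four entropies have the form $-\log\lambda+c$ with derivative $-1/\lambda$; for part~2) both analyze $h(\alpha)=\frac{\log\alpha}{\alpha-1}$; and for parts~3) and~4) both read the monotonicity and limits directly off the explicit formulas. The only difference is that you supply considerably more detail --- explicit derivatives for Tsallis and Sharma--Mittal, the auxiliary function $\phi(\alpha)=1-\tfrac{1}{\alpha}-\log\alpha$ for the monotonicity of $h$, and careful sign tracking --- whereas the paper's proof is extremely terse, asserting for instance that ``it is very easy to check'' that $h$ decreases and that parts~3)--4) ``follow directly from the relations.''
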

\begin{proof}
1) All derivatives in $\lambda$   equal $-\frac{1}{\lambda}$ and are strictly negative for any $\lambda>0$, consequently, respective entropies  strictly decrease in parameter $\lambda>0$.
The equalities \eqref{eq:expshelims} are obvious.

2) For $\alpha<1$ we have that $\frac{\log\alpha}{\alpha-1}>1$ while for  $\alpha>1$, oppositely,  $\frac{\log\alpha}{\alpha-1}<1$. Moreover, $$\lim_{\alpha\to 1}\frac{\log\alpha}{\alpha-1}=1,$$ and it is very easy to check that the function $\frac{\log\alpha}{\alpha-1}, \alpha\in(0, +\infty)$ decreases from $+\infty$ to $0$ whence the proof follows.

3), 4) Decreasing of the Tsallis and Sharma--Mittal entropies follows directly from the relations \eqref{eq:etexp} and \eqref{eq:esmexp2}. The equalities \eqref{eq:etexplim} and \eqref{eq:esmexplim} are also obvious.
\end{proof}

\section{Convergence  of Shannon entropies}\label{robust}
In this section we solve a following problem. It is known that under some conditions the sequence of discrete probability distributions converges to  another probability distribution. What will be the behaviour of the respective Shannon entropies? In the case of convergence, we can say that in this sense, Shannon entropy has the property of robustness. 
\subsection{Negative binomial and logarithmic distribution}

\begin{definition}
   A random variable $X_{p,r}^{(nb)}$ has negative binomial distribution with parameters $p$ and $r$, if $X_{p,r}^{(nb)}$ attains values $k\in \mathbb{N}\cup \{0\}$, and
   \begin{gather*}
   \Prob\{X_{p,r}^{(nb)}=k\}=\frac{\Gamma(k+r)}{k!\Gamma(r)}(1-p)^kp^r,
   \end{gather*}
   where $p\in (0,1)$ and $r>0$, $\Gamma(\cdot)$ is a Gamma function.
\end{definition}

\begin{definition}
A random variable $X_{p}^{(log)}$ has logarithmic distribution with parameter $p\in (0,1)$, if
 \begin{gather*}
   \Prob\{X_{p}^{(log)}=k\}=-\frac{(1-p)^k}{k\log p},\ k\in\mathbb{N}.
   \end{gather*}
\end{definition}

The next fact is very well-known and easy to prove.

\begin{proposition}\label{lognblim}
    Logarithmic distribution is the limit of conditional negative binomial distribution in the following sense: for any $k\ge 1$
    \begin{gather*}
        \lim_{r\to 0}\Prob\{X_{p,r}^{(nb)}=k \,|\, X_{p,r}^{(nb)}>0\}=
        \lim_{r\to 0}\frac{\frac{\Gamma(k+r)}{k!\Gamma(r)}(1-p)^kp^r}{1-p^r}
        =-\frac{(1-p)^k}{k\log p}.
    \end{gather*}
\end{proposition}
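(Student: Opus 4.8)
The plan is to evaluate the limit directly from the explicit formula, treating separately the three $r$-dependent factors in the numerator and the single factor in the denominator. First I would write the conditional probability as
\[
\Prob\{X_{p,r}^{(nb)}=k \mid X_{p,r}^{(nb)}>0\}
= \frac{\Gamma(k+r)}{k!\,\Gamma(r)}\,(1-p)^k p^r \cdot \frac{1}{1-p^r},
\]
and observe that $(1-p)^k$ does not depend on $r$, while $p^r \to 1$ as $r\to 0$, so the only delicate parts are the ratio $\Gamma(k+r)/\Gamma(r)$ and the vanishing denominator $1-p^r$. The key identity is that $r\,\Gamma(r) = \Gamma(r+1)$, so $\Gamma(r) = \Gamma(r+1)/r \sim 1/r$ as $r\downarrow 0$ (since $\Gamma$ is continuous at $1$ with $\Gamma(1)=1$); hence $\Gamma(k+r)/\Gamma(r) = r\,\Gamma(k+r)/\Gamma(r+1) \to r\cdot\Gamma(k)\cdot 1$, i.e.\ this factor behaves like $r\,(k-1)!$ for small $r$. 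For the denominator, $1-p^r = 1 - e^{r\log p} \sim -r\log p$ as $r\to 0$.

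Combining these asymptotics, I would write
\[
\frac{\Gamma(k+r)}{k!\,\Gamma(r)}\,\frac{(1-p)^k p^r}{1-p^r}
= \frac{r\,\Gamma(k+r)}{\Gamma(r+1)}\cdot\frac{1}{k!}\cdot(1-p)^k p^r\cdot\frac{1}{1-p^r},
\]
and take the limit factor by factor: $r\,\Gamma(k+r)/\Gamma(r+1) \to \Gamma(k) = (k-1)!$, so the first factor times $1/k!$ tends to $(k-1)!/k! = 1/k$; $p^r\to 1$; and $r/(1-p^r)\to -1/\log p$ by the denominator asymptotic above (equivalently, one extra factor of $r$ can be moved from numerator to denominator). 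This yields the limit $\frac{1}{k}\cdot(1-p)^k\cdot 1\cdot\bigl(-\tfrac{1}{\log p}\bigr) = -\frac{(1-p)^k}{k\log p}$, as claimed. To be fully rigorous one can instead keep a single factor of $r$ explicit, writing the expression as $\dfrac{\Gamma(k+r)}{\Gamma(r+1)}\cdot\dfrac{(1-p)^k p^r}{k!}\cdot\dfrac{r}{1-p^r}$, and note each of the three factors converges (the last one by L'Hôpital or by the derivative of $p^r$ at $r=0$).

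There is essentially no obstacle here; the only point requiring a moment's care is making the informal "$\Gamma(r)\sim 1/r$" rigorous, which is handled cleanly by the functional equation $\Gamma(r)=\Gamma(r+1)/r$ together with continuity of $\Gamma$ at the integers $1$ and $k$. Since the proposition is stated as "very well-known and easy to prove," a short paragraph carrying out exactly this factor-by-factor limit is all that is needed, and I would present it in that compact form.
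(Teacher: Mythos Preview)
Your argument is correct; the paper does not actually give a proof of this proposition, stating only that it is ``very well-known and easy to prove.'' Your factor-by-factor evaluation via $\Gamma(r)=\Gamma(r+1)/r$ and $r/(1-p^r)\to -1/\log p$ is exactly the natural route, and these are precisely the two limits the paper itself invokes in the proof of the subsequent proposition (namely $\lim_{r\to 0} r\Gamma(r)=1$ and $\lim_{r\to 0} r/(1-p^r)=-1/\log p$), so your proof aligns with the paper's intended reasoning.
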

For more connections between negative binomial distribution and logarithmic distribution see e.g.,
\cite{Anscombe1950, Mauricio2008}.

Taking Proposition \ref{lognblim} into account, we shall consider ``conditional'' Shannon entropy of the negative binomial distribution:
\begin{gather}\label{entrnb}
    \mathbf{H}_{SH}(X_{p,r}^{(nb)})=-\sum_{k=1}^\infty \mathbf{P}_{p,r}(k)\log \mathbf{P}_{p,r}(k)=\sum_{k=1}^\infty \mathbf{P}_{p,r}(k)\log \frac{1}{\mathbf{P}_{p,r}(k)},
\end{gather}
where $\mathbf{P}_{p,r}(k)=\Prob\{X_{p,r}^{(nb)}=k\,|\, X_{p,r}^{(nb)}>0\}$.

\begin{proposition} Conditional entropies converge:
    $$\lim_{r\to 0} \mathbf{H}_{SH}(X_{p,r}^{(nb)})= \mathbf{H}_{SH}(X_{p}^{(log)}),$$ where $\mathbf{H}_{SH}(X_{p}^{(log)})=\sum_{k=1}^\infty \mathbf{P}_{p}(k)\log \mathbf{P}_{p}(k)$,
    $\mathbf{P}_{p}(k)=-\frac{(1-p)^k}{k\log p},\ k\in \mathbb{N}$.
\end{proposition}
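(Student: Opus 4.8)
The plan is to establish the convergence termwise, by applying the dominated convergence theorem for series. Introduce the function $\eta(x):=-x\log x$ for $x\in(0,1]$ together with $\eta(0):=0$; it is continuous on $[0,1]$, bounded there with $\max_{[0,1]}\eta=\eta(1/e)=1/e$, and nondecreasing on $[0,1/e]$. By \eqref{entrnb} we have $\mathbf{H}_{SH}(X_{p,r}^{(nb)})=\sum_{k\ge1}\eta(\mathbf{P}_{p,r}(k))$ and, in the same way, $\mathbf{H}_{SH}(X_{p}^{(log)})=\sum_{k\ge1}\eta(\mathbf{P}_{p}(k))$. Proposition~\ref{lognblim} already supplies the termwise limit $\mathbf{P}_{p,r}(k)\to\mathbf{P}_{p}(k)$ as $r\to0$ for each fixed $k$, so continuity of $\eta$ gives $\eta(\mathbf{P}_{p,r}(k))\to\eta(\mathbf{P}_{p}(k))$. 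It therefore remains only to produce one summable sequence $g(k)$ with $\eta(\mathbf{P}_{p,r}(k))\le g(k)$ for all $k\ge1$ and all $r\in(0,1]$; dominated convergence on $\mathbb{N}$ with the counting measure then finishes the proof.

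The core step is a uniform geometric majorant for $\mathbf{P}_{p,r}(k)$. Using $\Gamma(k+r)/\Gamma(r)=r(r+1)\cdots(r+k-1)$, I would write
\[
\mathbf{P}_{p,r}(k)=\frac{r}{1-p^{r}}\cdot\left(\prod_{j=1}^{k-1}\frac{r+j}{j+1}\right)\cdot(1-p)^{k}\,p^{r}.
\]
For $r\in(0,1]$ each factor $\frac{r+j}{j+1}$ lies in $(0,1]$ and $p^{r}<1$, so the product of the last three factors does not exceed $(1-p)^{k}$. For the first factor, put $c:=-\log p>0$, so $p^{r}=e^{-rc}$; the elementary inequality $e^{x}-1\ge x$ at $x=rc$ yields $1-p^{r}=e^{-rc}(e^{rc}-1)\ge rc\,p^{r}$, hence $\frac{r}{1-p^{r}}\le\frac{1}{c\,p^{r}}\le\frac{1}{cp}$ for $r\in(0,1]$. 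Consequently, with $C:=\frac{1}{-p\log p}$ and $q:=1-p\in(0,1)$,
\[
0\le\mathbf{P}_{p,r}(k)\le Cq^{k}\qquad\text{for all }k\ge1,\ r\in(0,1].
\]

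Now fix $K$ so large that $Cq^{K}\le1/e$. For $k\ge K$ both $\mathbf{P}_{p,r}(k)$ and $Cq^{k}$ lie in $[0,1/e]$, so monotonicity of $\eta$ on $[0,1/e]$ gives
\[
\eta(\mathbf{P}_{p,r}(k))\le\eta(Cq^{k})=Cq^{k}\bigl(-\log C-k\log q\bigr)\le Cq^{k}\bigl(|\log C|+k\log\tfrac1q\bigr),
\]
whereas for $1\le k<K$ we use the crude bound $\eta(\mathbf{P}_{p,r}(k))\le1/e$. Thus $g(k):=\tfrac1e\,\1_{\{k<K\}}+Cq^{k}\bigl(|\log C|+k\log\tfrac1q\bigr)\,\1_{\{k\ge K\}}$ dominates $\eta(\mathbf{P}_{p,r}(k))$ uniformly in $r\in(0,1]$, and $\sum_{k\ge1}g(k)<\infty$ since $\sum_{k\ge1}kq^{k}<\infty$. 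Dominated convergence then gives $\lim_{r\to0}\mathbf{H}_{SH}(X_{p,r}^{(nb)})=\sum_{k\ge1}\eta(\mathbf{P}_{p}(k))=\mathbf{H}_{SH}(X_{p}^{(log)})$.

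The step I expect to be the main obstacle is precisely this uniform bound: the factor $r/(1-p^{r})$ is a $0/0$ indeterminacy as $r\to0$ and has to be controlled quantitatively (here through $e^{x}-1\ge x$), and then one must check that composing the geometric bound with $\eta(x)=-x\log x$ still yields a summable majorant — which works only because the logarithmic weight $-\log(Cq^{k})$ grows merely linearly in $k$ against the geometric decay $q^{k}$. The remaining ingredients (termwise convergence from Proposition~\ref{lognblim}, the shape of $\eta$) are routine.
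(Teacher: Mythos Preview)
Your proof is correct and follows essentially the same strategy as the paper: termwise convergence from Proposition~\ref{lognblim} combined with a summable dominant to justify exchanging limit and sum. Your execution is slightly cleaner---you need only the one-sided bound $\mathbf{P}_{p,r}(k)\le Cq^{k}$ and then exploit the monotonicity of $\eta(x)=-x\log x$ on $[0,1/e]$, whereas the paper bounds $\mathbf{P}_{p,r}(k)$ from both above and below and uses $\eta(x)\le b_k\log(1/a_k)$ for $a_k\le x\le b_k$.
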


\begin{proof}
    Taking into account Proposition~\ref{lognblim}, it is sufficient to find a summable dominant. It follows from~\eqref{entrnb} that we need in both lower and upper bound for $\mathbf{P}_{p,r}(k)$. Since
    $\sqrt{p}<p^r<1$ for $r<\frac{1}{2}$, it is sufficient to bound
    $\frac{\Gamma(k+r)}{\Gamma(r)(1-p^r)}$ from below and from above.
    Note that $\lim_{r\to 0} r\Gamma(r)=1$, so we can consider such $r$ that $\frac{1}{2}<r\Gamma(r)<2$.
    Moreover, $\lim_{r\to 0} \frac{r}{1-p^r}=-\frac{1}{\log p}$, and we can assume that
    \[
    \frac{1}{4\log \frac{1}{p}}<\frac{1}{\Gamma(r)(1-p^r)}<\frac{4}{\log\frac{1}{p}}.\qedhere
    \]
\end{proof}

Also, for $0<r<1\ \Gamma(k)<\Gamma(k+r)<\Gamma(k+1)$. Finally, for such $r$
\begin{gather*}
    \mathbf{P}_{p,r}(k)\log \frac{1}{\mathbf{P}_{p,r}(k)}\le
    \frac{\Gamma(k+1)}{k!}(1-p)^k\left[\log{k} +\log{\left(\frac{4\log\frac{1}{p}}{\sqrt{p}}\right)}+k\log{\frac{1}{1-p}}\right]\\=(1-p)^k\left[\log{k} +\log{\left(\frac{4\log\frac{1}{p}}{\sqrt{p}}\right)}+k\log{\frac{1}{1-p}}\right],
\end{gather*}
and the series
\begin{gather*}
   \frac{4}{\log\frac{1}{p}}\sum_{k=1}^{\infty}(1-p)^k
   \left[\log{k} +\log{\left(\frac{4\log\frac{1}{p}}{\sqrt{p}}\right)}+k\log{\frac{1}{1-p}}\right]
\end{gather*}
converges, whence the proof follows.

\subsection{Binomial and Poisson distribution}\label{binpoiss}
\begin{definition}
    A random variable $X_{p,n}^{(bin)}$ has binomial distribution with parameter $p\in (0,1)$,
    if $X_{p,n}^{(bin)}\in\{0,1,\ldots,n\}$ and
    \begin{gather*}
        \Prob \left \{X_{p,n}^{(bin)}=k\right\}=\binom{n}{k}p^k(1-p)^{n-k},\ 0\le k\le n.
    \end{gather*}    
\end{definition}

Consider the sequence of random variables $X_{p_n,n}^{(bin)},\ n\ge 1$, such that $np_n\to\lambda>0$ as $n\to\infty$.
It is well known that in this case $X_{p_n,n}^{(bin)}$ converges in distribution to Poisson random variable $X_\lambda^{(pois)}$, that is
\begin{gather*}
    \Prob_{p_n}(k)=\binom{n}{k} p_n^k(1-p_n)^{n-k}\to
    e^{-\lambda}\frac{\lambda^k}{k!},\ n\to \infty,
    \ 0\le k\le n.
\end{gather*}

Now we prove that the sequence of the respective Shannon entropies of binomial distributions   converges to the entropy of Poisson distribution.

\begin{proposition}
$\displaystyle    \entropy_{SH}^{(bin)}(p_n,n)=\sum_{k=1}^n
    \Prob _{p_n}(k)\log \left(\frac{1}{\Prob _{p_n}(k)}\right)
    \to \entropy_{SH}^{(pois)}(\lambda)
$,
as $n\to\infty$.
\end{proposition}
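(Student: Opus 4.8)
The plan is to realize the entropy as a sum $\sum_k\phi(\Prob_{p_n}(k))$ of the function $\phi(t)=-t\log t$ (with the convention $\phi(0)=0$, which makes $\phi$ continuous on $[0,1]$ and strictly increasing on $[0,1/e]$), and then to pass to the limit term by term via the dominated convergence theorem for series (counting measure on $\mathbb{N}\cup\{0\}$, with $\Prob_{p_n}(k)=0$ for $k>n$). The pointwise convergence $\Prob_{p_n}(k)\to e^{-\lambda}\lambda^k/k!$ is already recorded above, so by continuity of $\phi$ one has $\phi(\Prob_{p_n}(k))\to\phi(e^{-\lambda}\lambda^k/k!)$ for every fixed $k$. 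The only genuine work is to produce a summable majorant for $\phi(\Prob_{p_n}(k))$ that is uniform in~$n$.

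For the majorant I would use the elementary inequality $\binom{n}{k}\le n^k/k!$, which gives, for every $k\ge0$ and every $n$ large enough that $np_n\le\Lambda:=\lambda+1$,
\[
\Prob_{p_n}(k)=\binom{n}{k}p_n^k(1-p_n)^{n-k}\le\frac{(np_n)^k}{k!}\le\frac{\Lambda^k}{k!}.
\]
Fix $k_0$ so large that $\Lambda^k/k!<1/e$ for all $k\ge k_0$. Since $\phi$ is increasing on $[0,1/e]$ and $0\le\Prob_{p_n}(k)\le\Lambda^k/k!<1/e$ there, we obtain $\phi(\Prob_{p_n}(k))\le\phi(\Lambda^k/k!)\eqqcolon d_k$ for all $k\ge k_0$, and $\sum_{k\ge k_0}d_k=\sum_{k\ge k_0}\frac{\Lambda^k}{k!}(\log k!-k\log\Lambda)<\infty$, because $\Lambda^k/k!$ decays faster than $\log k!=O(k\log k)$ grows. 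The same bound $e^{-\lambda}\lambda^k/k!\le\Lambda^k/k!<1/e$ shows that the limiting Poisson terms are also dominated by $d_k$ for $k\ge k_0$.

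With these ingredients the argument is the standard split. Write $\entropy_{SH}^{(bin)}(p_n,n)=\sum_{0\le k<k_0}\phi(\Prob_{p_n}(k))+\sum_{k\ge k_0}\phi(\Prob_{p_n}(k))$. The first sum is finite and each summand converges, so it tends to $\sum_{0\le k<k_0}\phi(e^{-\lambda}\lambda^k/k!)$; the second tends to $\sum_{k\ge k_0}\phi(e^{-\lambda}\lambda^k/k!)$ by dominated convergence with majorant $(d_k)$. Adding the two limits yields $\entropy_{SH}^{(pois)}(\lambda)$ in the form \eqref{entr1}. The $k=0$ term in particular is harmless: $\Prob_{p_n}(0)=(1-p_n)^n\to e^{-\lambda}$ and $-\log\Prob_{p_n}(0)=-n\log(1-p_n)\to\lambda$, so it contributes $\lambda e^{-\lambda}$, matching the Poisson $k=0$ term, so the precise lower index of the sum is immaterial.

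The main obstacle is precisely this uniform summable domination of the entropy tail. A naive attempt to bound $\log(1/\Prob_{p_n}(k))$ above by a polynomial in $k$ fails, since binomial probabilities can be extremely small; the device that resolves the difficulty is to bound $\Prob_{p_n}(k)$ above by $\Lambda^k/k!$ and then exploit the monotonicity of $\phi$ on $[0,1/e]$, so that the required upper bound on $\phi(\Prob_{p_n}(k))$ becomes the manifestly summable $\phi(\Lambda^k/k!)$. Everything else is routine bookkeeping.
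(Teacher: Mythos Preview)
Your proof is correct and is a genuine simplification of the paper's argument. Both proofs share the upper bound $\Prob_{p_n}(k)\le(np_n)^k/k!$ coming from $\binom{n}{k}\le n^k/k!$, and both invoke the monotonicity of $\phi(t)=-t\log t$ on $[0,1/e]$. The difference is in how these ingredients are deployed. The paper splits the sum at $\lfloor n/2\rfloor$: on $\{0,\dots,\lfloor n/2\rfloor\}$ it supplements the upper bound with a lower bound $\Prob_{p_n}(k)\ge(\lambda/4)^k e^{-2\lambda}/k!$ (obtained by estimating the product $(1-\tfrac{k-1}{n})\cdots(1-\tfrac{1}{n})$ and the factor $(1-p_n)^{n-k}$), so as to dominate $\Prob_{p_n}(k)\log(1/\Prob_{p_n}(k))$ by bounding the two factors separately; on $\{\lfloor n/2\rfloor+1,\dots,n\}$, where no usable lower bound is available, it shows the tail vanishes directly, and it is only here that the paper uses the monotonicity of $\phi$. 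Your key observation is that this monotonicity device applies from any fixed $k_0$ with $\Lambda^{k_0}/k_0!<1/e$ onward, independently of $n$, so the lower bound on $\Prob_{p_n}(k)$ and the $n$-dependent split are unnecessary. Since the lower bound is where most of the paper's technical effort lies, your route is materially shorter while using the same ideas; what the paper's approach buys in exchange is an explicit polynomial-in-$k$ control of $\log(1/\Prob_{p_n}(k))$ on the bulk range, which is not needed for the present statement.
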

\begin{proof}
As before, we shall bound $\Prob _{p_n}(k)$ from below and from above in order to find summable dominant.
Consider two cases.

(i) Let $0\le k \le \lfloor{\frac{n}{2}}\rfloor$.
Then, on one hand,
\begin{gather}
    \Prob_{p_n}(k)=\frac{n!}{k!(n-k)!}p_n^k(1-p_n)^{n-k}
    =\frac{(n-k+1)\cdot\ldots\cdot n}{k!}p_n^k(1-p_n)^{n-k}
    \nonumber\\
    =\frac{(1-\frac{k-1}{n})\cdot\ldots\cdot (1-\frac{1}{n})}{k!}(np_n)^k(1-p_n)^{n-k}
    \le \frac{1}{k!}(np_n)^k(1-p_n)^{n-k}.
    \label{poisprobest}
\end{gather}
Starting from some $n_0$ we have that $np_n\le 2\lambda$. So, let us consider  $n\ge n_0$,  and  fix that   $np_n\le 2\lambda$.
Then for $n\ge n_0$ $\Prob_{p_n}(k)\le \frac{(2\lambda)^k}{k!}$.
Note that  this upper bound is valid for all $0\le k \le n$, not only for  $0\le k \le \lfloor{\frac{n}{2}}\rfloor$.

On the other hand, for $n\ge 2$ and $0\le k\le \lfloor{\frac{n}{2}}\rfloor$, it follows from the equalities in \eqref{poisprobest} that $\Prob_{p_n}(k)\ge \dfrac{(\frac12)^{k-1}}{k!}(np_n)^k(1-p_n)^{n-k}$.

Furthermore, starting from some $n_1$ we have that $np_n\ge \frac{\lambda}{2}$. So, let us consider   $n\ge n_0\vee n_1$, so that
$\frac{\lambda}{2}\le np_n\le 2\lambda$.
Then $(np_n)^k\ge (\frac{\lambda}{2})^k$, and
$(1-p_n)^{n-k}\ge (1-\frac{2\lambda}{n})^{n}
=(1-\frac{2\lambda}{n})^{\frac{n}{2\lambda}2\lambda}$
(assuming without loss of generality that $\frac{2\lambda}{n} < 1$).

Note that starting from some $n_2$  it holds that 
$(1-\frac{2\lambda}{n})^{\frac{n}{2\lambda}}
\ge \frac{1}{2e}$, because $(1-\frac{2\lambda}{n})^{\frac{n}{2\lambda}}\to e^{-1}$, $n\to \infty$.
Then for   $n\ge n_0\vee n_1\vee n_2$ and for $0 \le k\le \lfloor{\frac{n}{2}}\rfloor$ we have that
$\Prob_{p_n}(k)\ge \frac{1}{2^kk!}
(\frac{\lambda}{2})^ke^{-2\lambda}
$, and
\begin{gather*}
    \Prob_{p_n}(k)\log \frac{1}{\Prob_{p_n}(k)}
    \le \frac{(2\lambda)^k}{k!}
\log\left(\frac{k!4^k}{\lambda^k}e^{2\lambda}\right)
= \frac{(2\lambda)^k}{k!}
\left(\log{k!}+k\log{\frac{4}{\lambda}}+
2\lambda\right),
\end{gather*}
and all series
$\sum_{k=0}^\infty \frac{(2\lambda)^k}{k!}\log{k!}$, 
$\sum_{k=0}^\infty \frac{(2\lambda)^k}{(k-1)!}$
and $\sum_{k=0}^\infty \frac{(2\lambda)^k}{k!}$ converge.

Therefore, we can put $a_k^{(n)}=0$ for $k>\lfloor\frac n2 \rfloor$,
$a_k^{(n)}=-\Prob_{p_n}(k)\log \Prob_{p_n}(k)$ for $0\le k\le \lfloor\frac n2 \rfloor$
and get that
\begin{gather*}
\sum_{k=0}^\infty a_k^{(n)}\to
-\sum_{k=0}^\infty e^{-\lambda}\frac{\lambda^k}{k!}\log\left(e^{-\lambda}\frac{\lambda^k}{k!}\right),\ n\to\infty, 
\shortintertext{i.e.}
\sum_{k=0}^{\lfloor\frac n2 \rfloor}
\Prob_{p_n}(k)\log {\frac{1}{\Prob_{p_n}(k)}}\to
\entropy_{SH}^{(pois)}(\lambda),\ n\to\infty.
\end{gather*}

Now it is sufficient to prove that the sum
$$
S_n=\sum_{k=\lfloor\frac{n}{2} \rfloor+1}^n
\Prob_{p_n}(k)\log {\frac{1}{\Prob_{p_n}(k)}}\to 0,
\ n\to\infty.
$$
So, consider the case

(ii) Let $\lfloor\frac{n}{2}\rfloor<k\le n$.
As before, $\Prob_{p_n}(k)\le \frac{(2\lambda)^k}{k!}$.
Since $\frac{(2\lambda)^k}{k!}\to 0,\ k\to\infty$, and moreover
$\frac{(2\lambda)^k}{k!}$ decreases in $k$ if $k>2\lambda$,
consider such $n\ge n_0\vee n_1\vee n_2$ that $\lfloor\frac{n}{2}\rfloor> 2\lambda$. For these $n$ we have the upper bound
$$\frac{(2\lambda)^k}{k!}\le
\frac{(2\lambda)^{\lfloor\frac{n}{2}\rfloor+1}}{\left(\lfloor\frac{n}{2}\rfloor+1\right)!}
$$ for $k\ge \lfloor\frac{n}{2}\rfloor.$ 

Now, consider function $f(x)=x\log{\frac{1}{x}},\ x\in (0,1)$.
It increases when $x$ increases from 0 to $\frac1e$ and then  decreases. Therefore, if $x_0\le \frac1e$ and $0<x<y\le \frac1e$, then $f(x)<f(y)$. It means that for $k\ge \lfloor\frac{n}{2}\rfloor$ and such $n$ from the set of numbers chosen according to all above assumptions and additionally such that  
$
\frac{(2\lambda)^{\lfloor\frac{n}{2}\rfloor+1}}{\left(\lfloor\frac{n}{2}\rfloor+1\right)!}<\frac1e
$, it holds that 
$$
\Prob_{p_n}(k)\log {\frac{1}{\Prob_{p_n}(k)}}
\le \frac{(2\lambda)^{\lfloor\frac{n}{2}\rfloor+1}}{\left(\lfloor\frac{n}{2}\rfloor+1\right)!}
\log{\frac{\left(\lfloor\frac{n}{2}\rfloor+1\right)!}{(2\lambda)^{\lfloor\frac{n}{2}\rfloor+1}}}
\le \frac{(2\lambda)^{n+1}}{\left(n+1\right)!}
\log{ \frac{\left(n+1\right)!}{(2\lambda)^{n+1}}},
$$
where the last inequality holds because we consider such $n$ that $\lfloor\frac{n}{2}\rfloor>2\lambda$ and for them
$\frac{(2\lambda)^{\lfloor\frac{n}{2}\rfloor+1}}{\left(\lfloor\frac{n}{2}\rfloor+1\right)!}
<\frac{(2\lambda)^{n+1}}{\left(n+1\right)!}
<\frac1e$.
It means that for such $n$ the sum
\begin{gather*}
    S_n\le n \frac{(2\lambda)^{n+1}}{\left(n+1\right)!}
    \left[\log{(n+1)!-(n+1)\log{2\lambda}}\right]\\
    \le n\frac{(2\lambda)^{n+1}}{\left(n+1\right)!}
    (n+1)\log{(n+1)}
    +\frac{(2\lambda)^{n+1}}{\left(n-1\right)!}
    |\log{2\lambda}|\\
    =\frac{(2\lambda)^{n+1}}{\left(n-1\right)!}
    \left(\log{(n+1)}+|\log{2\lambda}|\right)
    \to 0,\ n\to \infty,
\end{gather*}
and the proof follows.
\end{proof}

\section{Extreme values of Shannon entropy   of Gaussian vectors}\label{sec6}
In this section we will study the following question. Consider centered Gaussian random variable  with known variance $\sigma^2$. In the paper \cite{MMRR} we provided the values of various entropies  of this random variable and stated that all these entropies  increase  with $\sigma^2$.  In particular, Shannon entropy is calculated by the formula   \eqref{eq:SHdens}  and equals $$\mathbf{H}^{norm}_{SH}=\frac12(1+\log(2\pi))+\frac12\log\sigma^2.$$ Furthermore, if we consider a Gaussian stochastic process, for example, Wiener process or fractional Brownian motion, for which we know their one-dimensional distributions, we know their variances at each moment of time and therefore know the behavior  of the corresponding Shannon entropy in time. This issue was also studied in \cite{MMRR}. However, it is certain that the behavior of the entropy of a Gaussian process is not limited to its one-dimensional distributions and is determined by the family of finite-dimensional distributions, i.e., in our case, by the respective Gaussian vectors.  The formula for the Shannon entropy for Gaussian vector $\xi=(\xi, \xi_2, \ldots,\xi_n)$ with covariance matrix $A_n$ was provided in \cite{stratonovich} and has a form  $$\mathbf{H}^{norm}_{SH}(\xi)=\frac{n}{2}(1+\log(2\pi))+\frac{1}{2}\log(\det(A_n)),$$ where $A_n$ is a covariance matrix of vector $\xi$. Obviously, the main role in this formula is played by $\det(A_n)$, which in fact is the determinant of    positive-semidefinite matrix. In the general case, calculation of this determinant analytically is impossible. For example, we do not know any closed formula for its value in the case of fractional Gaussian noise, because of the complicated form of covariances. This problem was studied in detail,    e.g, in \cite{MMRS}.  Therefore, let us consider related  problems that admit analytical solutions. Namely, returning to the one-dimensional distributions, it is natural to ask the following question: if we have a Gaussian vector with fixed variances, what are extreme values of its Shannon entropy? Using algebraic language: if we have a positive-semidefinite matrix with fixed diagonal elements, what are the extreme values of its determinant? And, returning to Gaussian vectors, on which of them are extreme values  realized, if such vectors exist?  In this connection, we provide Hadamard's theorem on determinants for positive-semidefinite matrices.
\subsection{Hadamard's theorem on determinants for positive-semidefinite matrices} Consider quadratic  matrix $A_{n}=(a_{ij})_{i,j=1}^n$ and assume that it is positive semidefinite, and its 
diagonal elements $a_{ii}>0, \forall\ i=\overline{1,n},$ are fixed. Let us recall the Hadamard’s inequality for the determinants of positive semidefinite matrices, see e.g., \cite{Bellman1997, Hadamard1893, ROZANSKI2017} or \cite[p. 1077]{GradshteynRyzhik2007}. 
 
\begin{theorem}[Hadamard's inequality {\cite[Theorem 11]{ROZANSKI2017}}] 
    For the positive semidefinite Hermitian matrix $A_{n}=(a_{ij})_{i,j=1}^n,\ n\ge 2,$ with non-zero elements in the diagonal, the inequality
    \begin{gather*}
    \det(A_n)\le \prod_{i=1}^{n}a_{ii},
    \end{gather*}
    is satisfied.
    Equality holds if and only if the matrix $A_{n}$ is a diagonal matrix.
\end{theorem}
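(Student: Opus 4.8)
The plan is to reduce the statement to the case of a matrix with unit diagonal and then apply the arithmetic--geometric mean inequality to the eigenvalues. Since every $a_{ii}>0$, the diagonal matrix $D=\operatorname{diag}(a_{11},\dots,a_{nn})$ is invertible with real positive square root $D^{1/2}$, and I would set $B=D^{-1/2}A_nD^{-1/2}$. A direct check shows that $B$ is Hermitian, that $x^*Bx=(D^{-1/2}x)^*A_n(D^{-1/2}x)\ge 0$ so $B$ is positive semidefinite, that $b_{ii}=1$ for every $i$, and that $\det B=\det(A_n)/\prod_{i=1}^n a_{ii}$. Hence the asserted inequality is equivalent to $\det B\le 1$, and its equality case corresponds to $\det B=1$.

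Next, letting $\lambda_1,\dots,\lambda_n\ge 0$ be the eigenvalues of $B$, one has $\sum_{i=1}^n\lambda_i=\operatorname{tr}B=\sum_{i=1}^n b_{ii}=n$, and therefore, by AM--GM,
\[
\det B=\prod_{i=1}^n\lambda_i\le\Bigl(\tfrac1n\sum_{i=1}^n\lambda_i\Bigr)^{\!n}=1,
\]
which yields $\det(A_n)\le\prod_{i=1}^n a_{ii}$. In particular, if $A_n$ is singular then $\det B=0<1$ and the inequality is strict, which is consistent with the fact that a diagonal matrix with positive diagonal entries is nonsingular.

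For the equality case I would argue as follows: if $\det B=1$, then all $\lambda_i$ are strictly positive and AM--GM holds with equality, which forces $\lambda_1=\dots=\lambda_n$; combined with $\sum_i\lambda_i=n$ this gives $\lambda_i=1$ for all $i$. By the spectral theorem, a Hermitian matrix all of whose eigenvalues equal $1$ is the identity, so $B=I$ and hence $A_n=D^{1/2}BD^{1/2}=D$ is diagonal. The converse is immediate, since $\det(A_n)=\prod_i a_{ii}$ for any diagonal $A_n$.

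I do not expect a genuine obstacle here; the only steps needing a little care are verifying that the diagonal congruence $A_n\mapsto D^{-1/2}A_nD^{-1/2}$ preserves Hermitian positive semidefiniteness, and noticing in the equality analysis that it is ``all eigenvalues equal $1$'' (not merely $\det B=1$) that forces $B=I$, which uses the spectral theorem. An eigenvalue-free alternative is to write $A_n=M^*M$ as the Gram matrix of column vectors $v_1,\dots,v_n$, apply Gram--Schmidt to obtain an orthogonal system $w_1,\dots,w_n$ with $\|w_i\|\le\|v_i\|=\sqrt{a_{ii}}$ and $\det(A_n)=\prod_{i=1}^n\|w_i\|^2$; equality then amounts to pairwise orthogonality of the $v_i$, i.e.\ $a_{ij}=\langle v_j,v_i\rangle=0$ for $i\ne j$.
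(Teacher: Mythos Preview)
Your argument is correct and complete: the diagonal rescaling $B=D^{-1/2}A_nD^{-1/2}$ reduces to the unit-diagonal case, the AM--GM inequality applied to the nonnegative eigenvalues gives $\det B\le 1$, and the equality analysis via the spectral theorem is clean. The alternative Gram--Schmidt sketch you mention at the end is also a valid route.

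As for comparison with the paper: there is nothing to compare. The paper does not prove this theorem at all; it simply quotes it as a known result from \cite[Theorem~11]{ROZANSKI2017} and uses it to discuss extreme values of the Shannon entropy of Gaussian vectors. Your proof therefore supplies what the paper deliberately omits.
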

\begin{remark} Obviously, minimal value of determinant is zero. Now, let us consider  these maximal and minimal values in application to covariance matrices. 
   In this connection, consider  the following cases.
 \end{remark}
\begin{enumerate}
 \item Let $a_{ii}>0, 1\le i\le n$, and $a_{ij}=0,\ i\neq j$. Thus we have diagonal matrix, that is,  $\det(A_{n})=\prod_{k=1}^na_{kk}$. This case corresponds to  the  non-degenerate  Gaussian vector  with non-correlated consequently independent components.
 
 \item Let $a_{ij}=\sqrt{a_{ii}a_{jj}},\ i\neq j$. In this case all leading principal
minors equal zero, and determinant equals
 \begin{gather*}
 \det(A_{n}) =\det(\{\sqrt{a_{ii}a_{jj}}\})_{i,j=1}^n
\\=\left(\prod_{k=1}^na_{kk}\right)^{\frac12}
\begin{vmatrix}
 \sqrt{a_{11}} & \sqrt{a_{11}} & \ldots &   \sqrt{a_{11}}\\
 \sqrt{a_{22}} & \sqrt{a_{22}} & \ldots &   \sqrt{a_{22}}\\
 \ldots & \ldots &  \ddots &  \ldots\\ 
 \sqrt{a_{nn}} & \sqrt{a_{nn}} & \ldots &  \sqrt{a_{nn}}
\end{vmatrix}=
\left(\prod_{k=1}^na_{kk}\right)
\begin{vmatrix}
 1 & 1 & \ldots   & 1\\
 1 & 1 & \ldots   & 1\\
 \ldots & \ldots & \ddots &    \ldots\\ 
 1 & 1 & \ldots  & 1
\end{vmatrix}=0.
\end{gather*}
\end{enumerate}
  How can we realize in terms of a Gaussian vector the situation when this minimal value is reached? Let $\xi_0\simeq \mathcal{N}(0,1)$ be a standard Gaussian variable, and
$\xi_i=\sqrt{a_{ii}}\xi_0,\ 1\le i\le n$.
Then $\ex\xi_i=0,\ \ex\xi_i^2=a_{ii},\ \ex\xi_i\xi_j=\sqrt{a_{ii}}\sqrt{a_{jj}}$ and Gaussian vector
$ {\xi}=(\xi_1,\xi_2,\ldots,\xi_n)=(\sqrt{a_{11}}\xi_0,\sqrt{a_{22}}\xi_0, \ldots,\sqrt{a_{nn}} \xi_0 )$ indeed has the   covariance matrix
\begin{gather*}
 A_n=\{\sqrt{a_{ii}a_{jj}}\}_{i,j=1}^n.
  \end{gather*} 
Of course, such vector and its covariance matrix with zero determinant are not unique. Indeed,  we can consider the vector $${\xi}'=(\pm\sqrt{a_{11}}\xi_0,\pm\sqrt{a_{22}}\xi_0, \ldots,\pm\sqrt{a_{nn}} \xi_0 )$$ with any combination of signs, and the result will be the same.

\begin{remark} As an example, consider fractional Brownian motion (fBm) and fractional Gaussian noise (fGn). Fractional Brownian motion  $B^H=\{B^H_t,\ t\ge 0\}$ with Hurst index $H\in(0,1)$ is a centered Gaussian process with covariance function
\begin{gather*}
\ex B_t^HB_s^H=\frac12\left(t^{2H}+s^{2H}-|t-s|^{2H}\right).
\end{gather*}
\end{remark}

It is possible to consider fBm also for $H=0$ and $H-1$.
In this cases we put, respectively,  $B_t^1=t\xi$,
where $\xi\sim \mathcal{N}(0,1)$, and $B_t^0=\xi_t-\xi_0$, where all $\xi_t,\ t\ge 0$, are independent $\mathcal{N}(0,\frac12)$-random variables.

Now, fGn is created as a stochastic process with discrete time of the form $$\{\Delta_k^H=B_k^H-B_{k-1}^H,\ k \in\mathbb{N} \}.$$ It is a   stationary process. Its covariance has a following form: for any $k>l$
\begin{gather*}
  \ex \Delta_k^H \Delta_l^H= \ex \Delta_{k-l+1}^H \Delta_1^H
  =\frac12\left((k-l+1)^{2H}-2(k-l)^{2H}+(k-l-1)^{2H}\right),
\end{gather*}
see, e.g., \cite{MMRS}. For any $k\ge 1\ \ex(\Delta_k^H)^2=1$.
Taking successive values  of fGn, we can  create    Gaussian vector
 \begin{gather*}
     \xi(n,H)=\left(\Delta_1^H,\Delta_2^H, \ldots,\Delta_n^H\right),\ n\ge1.
 \end{gather*}
 
 If $H=1$, then $\ex \Delta_k^H \Delta_l^H=1$ for any $1\le k, l\le n$.
 If $H=1/2$, then $\ex \Delta_k^H \Delta_l^H=\delta_{kl}$ for any $1\le k, l\le n$.
 Therefore, for $H=1\ \det{(A_n)}=0$ and for $H=1/2\ \det{(A_n)}=1$.
It means that $\det{(A_n)}$ achieves its maximum in $H$ at the point $H=1/2$, and this  point of  maximum is unique, according to Hadamard’s inequality, and it achieves minimal value in $H$ at the point $H=1$, and this point of minimum is also unique, because it was proved in~\cite{bookBMRS} that covariance matrix of the fGn with $H\in(0,1)$ is non degenerate.
Since $\det{(A_n)}$ is a continuous function in $H$, it attains all values between 0 and 1 when $H\in(1/2,1)$.
Moreover, it was assumed in~\cite{MMRS} as a hypothesis that $\det{(A_n)}$ increases in~$H$, when $H$ increases from 0 to $1/2$, and decreases in~$H$, when $H$ increases from 1/2 to 1, but this hypothesis is still not proved.

\appendix

\section{}

\begin{lemma}\label{l:series-to-inf}
\[
\lim_{\lambda\to\infty} e^{-\lambda} \sum_{i=1}^\infty \frac{\lambda^i}{i!} \log(i+1) = \infty.
\]
\end{lemma}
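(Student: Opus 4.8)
The plan is to interpret the left-hand side probabilistically and then exploit the concentration of the Poisson law. Writing $X_\lambda^{(pois)}$ for a Poisson random variable with mean $\lambda$, we have
\[
e^{-\lambda} \sum_{i=1}^\infty \frac{\lambda^i}{i!} \log(i+1) = \ex \log\bigl(X_\lambda^{(pois)} + 1\bigr),
\]
since the $i=0$ term vanishes. Intuitively $X_\lambda^{(pois)}$ concentrates around $\lambda$, so this expectation should behave like $\log\lambda\to\infty$; to make this rigorous one only needs a one-sided (lower) estimate, which avoids any issue with the concavity of $\log$.

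First I would fix an arbitrary integer $N\ge 1$. Since $i\mapsto\log(i+1)$ is nondecreasing and nonnegative, $\log(i+1)\ge\log(N+1)\,\1_{\{i\ge N\}}$ for every $i\ge 1$, and therefore, splitting the everywhere-convergent nonnegative-term series,
\[
e^{-\lambda} \sum_{i=1}^\infty \frac{\lambda^i}{i!}\log(i+1)
\;\ge\; \log(N+1)\,e^{-\lambda}\sum_{i=N}^\infty\frac{\lambda^i}{i!}
\;=\; \log(N+1)\Bigl(1-e^{-\lambda}\sum_{i=0}^{N-1}\frac{\lambda^i}{i!}\Bigr).
\]
For fixed $N$, the quantity $e^{-\lambda}\sum_{i=0}^{N-1}\lambda^i/i!$ is a polynomial in $\lambda$ of degree $N-1$ times $e^{-\lambda}$, hence it tends to $0$ as $\lambda\to\infty$; equivalently $\Prob(X_\lambda^{(pois)}\ge N)\to 1$. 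Passing to the limit in the displayed inequality gives
\[
\liminf_{\lambda\to\infty} e^{-\lambda}\sum_{i=1}^\infty\frac{\lambda^i}{i!}\log(i+1)\;\ge\;\log(N+1).
\]

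Since $N$ is arbitrary, the $\liminf$ on the left equals $+\infty$, which is precisely the assertion of the lemma. I do not anticipate any serious obstacle: the only points meriting a word of care are the order of the two limiting operations (first $\lambda\to\infty$ with $N$ fixed, then $N\to\infty$) and the termwise manipulation of the series, which is harmless because all summands are nonnegative; the elementary limit $e^{-\lambda}\sum_{i=0}^{N-1}\lambda^i/i!\to 0$ can be invoked directly if one prefers not to phrase it through Poisson tails.
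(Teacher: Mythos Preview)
Your proof is correct and follows essentially the same route as the paper's: bound the series below by $\log(N+1)$ times the Poisson tail $e^{-\lambda}\sum_{i\ge N}\lambda^i/i!$, use that this tail tends to $1$ as $\lambda\to\infty$, and then let $N\to\infty$. The probabilistic framing and the explicit care about the order of limits are minor elaborations on the same argument.
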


\begin{proof}
Obviously, for any $N>1$
\[
\sum_{i=1}^\infty \frac{\lambda^i}{i!} \log(i+1)
\ge
\log(N+1) \sum_{i=N}^\infty \frac{\lambda^i}{i!} .
\]
Moreover,
\[
\lim_{\lambda\to\infty} e^{-\lambda} \sum_{i=0}^{N-1} \frac{\lambda^i}{i!}   = 0.
\]
Therefore,
\begin{align*}
\lim_{\lambda\to\infty} e^{-\lambda} \sum_{i=1}^\infty \frac{\lambda^i}{i!} \log(i+1)
&\ge
 \log(N+1) \lim_{\lambda\to\infty} e^{-\lambda} \sum_{i=N}^\infty \frac{\lambda^i}{i!} 
 \\
 &=  \log(N+1) \lim_{\lambda\to\infty} e^{-\lambda} \sum_{i=0}^\infty \frac{\lambda^i}{i!}= \log(N+1).
\end{align*}
Since $N>1$ is arbitrary, we get the proof.
\end{proof}

\end{document}